\def\BState{\State\hskip-\ALG@thistlm}
\def\step{%
   \@ifnextchar[ \@myitem{\@noitemargtrue\@myitem[\@itemlabel]}}
\def\@myitem[#1]{\item[#1]\mbox{}}
\newcommand\R{\mathbb{R}}
\newcommand\E{\mathbb{E}}
\newcommand{\calC}{\mathcal{C}}
\newcommand{\calE}{\mathcal{E}}
\newcommand{\calF}{\mathcal{F}}
\newcommand{\calK}{\mathcal{K}}
\newcommand{\calM}{\mathcal{M}}
\newcommand{\calS}{\mathcal{S}}
\newcommand{\calT}{\mathcal{T}}
\newcommand{\ra}{\rightarrow}
\newcommand{\la}{\leftarrow}
\renewcommand\epsilon{\varepsilon}
\newcommand{\vect}[1]{\mathbf{#1}}
\DeclareMathOperator*{\argmin}{arg\,min}
\newcommand{\M}{\calM}
\newtheorem{proposition}{Proposition}
\title{Representation and reconstruction of covariance operators in linear inverse problems}
\author[1]{Eardi Lila\thanks{elila@uw.edu}}
\author[2]{Simon Arridge\thanks{s.arridge@cs.ucl.ac.uk}}
\author[3]{John A. D. Aston\thanks{j.aston@statslab.cam.ac.uk}}
\affil[1]{Department of Biostatistics, University of Washington}
\affil[2]{Centre for Medical Image Computing, University College London}
\affil[3]{Statistical Laboratory, DPMMS, University of Cambridge}
\date{}
\begin{document}

\maketitle

\begin{abstract}
We introduce a framework for the reconstruction and representation of functions in a setting where these objects cannot be directly observed, but only indirect and noisy measurements are available, namely an inverse problem setting. The proposed methodology can be applied either to the analysis of indirectly observed functional images or to the associated covariance operators, representing second-order information, and thus lying on a non-Euclidean space. To deal with the ill-posedness of the inverse problem, we exploit the spatial structure of the sample data by introducing a flexible regularizing term embedded in the model. Thanks to its efficiency, the proposed model is applied to MEG data, leading to a novel approach to the investigation of functional connectivity.
\end{abstract}

\section{Introduction}\label{sec:intro}
An inverse problem is the process of recovering missing information from indirect and noisy observations. Not surprisingly, inverse problems play a central role in numerous fields such as, to name a few, geophysics \citep{Zhdanov2002}, computer vision \citep{Hartley2004}, medical imaging \citep{Arridge1999, Lustig2008} and machine learning \citep{DeVito2005}.

Solving a linear inverse problem means finding an unknown $x$, for instance a function or a surface, from a noisy observation $y$, which is a solution to the model
\begin{equation}\label{eq:inv_model_intro}
y = \calK x + \epsilon,
\end{equation}
where $y$ and $\epsilon$ belong to an either finite or infinite dimensional Banach space. The map $\calK$ is called forward operator and is generally assumed to be known, although its uncertainty has also been taken into account in the literature \citep{Arridge2006, Golub1980, Gutta2019, Kluth2017, Lehikoinen2007, Nissinen2009, Zhu2011}. The term $\epsilon$ represents observational error.

Problem \ref{eq:inv_model_intro} is a well-studied problem within applied mathematics (for early works in the field, see \cite{Calderon1980, Geman1990, Adorf1995}). Its main difficulties arise from the fact that, in practical situations, an inverse of the forward operator does not exist, or if it does, it amplifies the noise term. For this reason such a problem is called ill-posed. Consequently, the estimation of the function $x$ in (\ref{eq:inv_model_intro}) is generally tackled by minimizing a functional which is the sum of a data (fidelity) term and a regularizing term encoding prior information on the function to be recovered \citep[see, among others, ][]{Tenorio2001,Mathe2006,Cavalier2008, Lefkimmiatis2012, YueHu2012}. For convex optimization functionals, modern efficient optimization methods can be applied \citep{Boyd2010, Beck2009, Chambolle2011, Chambolle2016, Burger2016}. Alternatively, when it is important to assess the uncertainty associated with the estimates, a Bayesian approach could be adopted \citep{Kaipio2005, Calvetti2007,Stuart2010, Repetti2019}. The deep convolutional neural network approach has also been applied to this setting \citep{Jin2017,McCann2017}.

In imaging sciences, it is sometimes of interest to find an optimal representation and perform statistics on the second order information associated with the functional samples, i.e. the covariance operators describing the variability of the underlying functional images. This is, for instance, the case in a number of areas of neuroimaging, particularly those investigating functional connectivity. In this work, we establish a framework for reconstructing and optimally representing indirectly observed samples $\calC_1,\ldots, \calC_n$, that are covariance operators, expressing the second order properties of the underlying unobserved functions. The indirect observations are covariance operators generated by the model
\begin{equation}\label{eq:cov_model_intro}
\calS_i = \calK_i \circ \calC_i \circ \calK_i^* + \calE_i, \qquad i=1,\ldots,n,
\end{equation}
where $\calK_i^*$ denotes the adjoint operator and the term $\calE_i$ models observational error. The term $\calK_i \circ \calC_i \circ \calK_i^*$ represents the covariance operator of $\calK_i X^{(i)}$, with $X^{(i)}$ an underlying random function whose covariance operator is $\calC_i$. 

As opposed to more classical linear inverse problems formulations, Problem \ref{eq:cov_model_intro} introduces the following additional difficulties:
\begin{itemize}
\item We are in a setting where each sample is a high-dimensional object that is a covariance operator; it is important to take advantage of the information from all the samples to reconstruct and represent each of them. 
\item The elements $\{\calC_i\}$ and $\{\calS_i\}$ live on non-Euclidean spaces, as they belong to the positive semidefinite cone, and it is important to account for this manifold structure in the formulation of the associated estimators.
\item In an inverse problem setting it is fundamental to be able to introduce spatial regularization, however it is not obvious how to feasibly construct a regularizing term for covariance operators reflecting, for instance, smoothness assumptions on the underlying functional images.
\end{itemize}

More general non-Euclidean settings could also be accommodated. Specifically, the error term could be defined on a tangent space and mapped to the original space through the exponential mapping. Another setting of interest is the case of error terms that push the observables out of the original space. In our applications this is not an issue, as the contaminated observations are themselves empirical covariance matrices, which belong to the non-Euclidean space of positive semidefinite matrices.

We tackle Problem \ref{eq:cov_model_intro} by generalizing the concept of Principal Component Analysis (PCA) to optimally represent and understand the variation associated with samples that are indirectly observed covariance operators. The proposed model is also able to deal with the simpler case of samples that are indirectly observed functional images belonging to a linear functional space.

\subsection{Motivating application - functional connectivity}
In recent years, statistical analysis of covariance matrices has gained a predominant role in medical imaging and in particular in functional neuroimaging. In fact, covariance matrices are the natural objects to represent the brain's functional connectivity, which can be defined as a measure of covariation, in time, of the cerebral activity among brain regions. While many techniques have been proposed to describe functional connectivity, almost all can be described in terms of a function of a covariance or related matrix.

\begin{figure}[!htb] % figuur 1
%\vspace{6pc}
\centering
\includegraphics[width=0.6\textwidth]{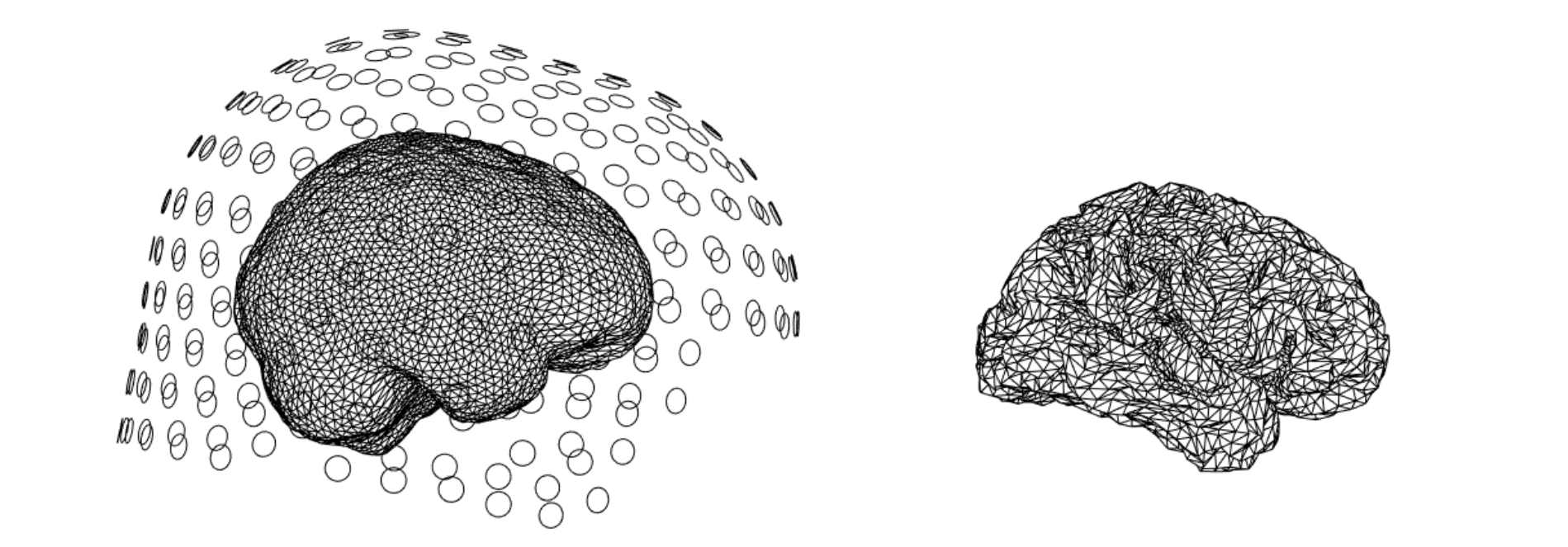}
\includegraphics[width=0.62\textwidth]{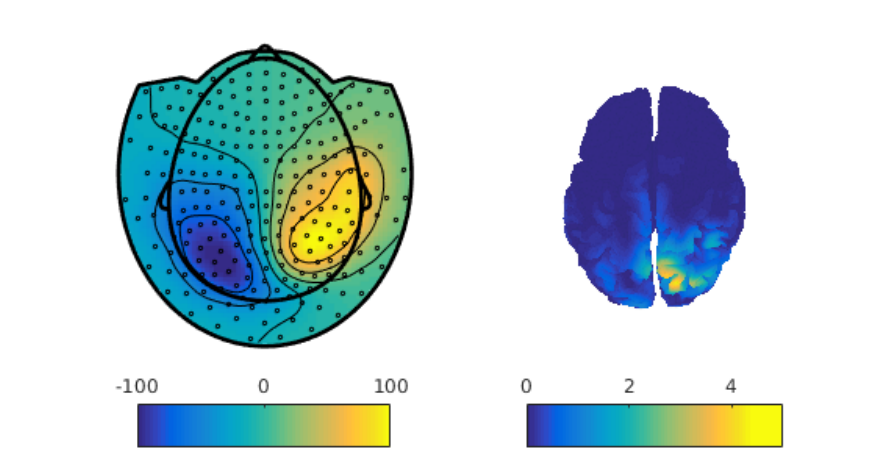}
\caption[]{On the top left, head model of a subject and superimposition of the 248 MEG sensors positioned around the head, called `sensors space'. On the top right, brain model of the same subject represented by a triangular mesh of 8K nodes, which represents the `brain space'. On the bottom left, an example of a synthetic signal detected by the MEG sensors. The dots represent the sensors, the color map represents the signal detected by the sensors. On the bottom right, intensity of the reconstructed signal on the triangular mesh of the cerebral cortex.}
\label{fig:forwardoperator}
\end{figure}

\begin{figure}[!htb] % figuur 1
%\vspace{6pc}
\includegraphics[width=\textwidth]{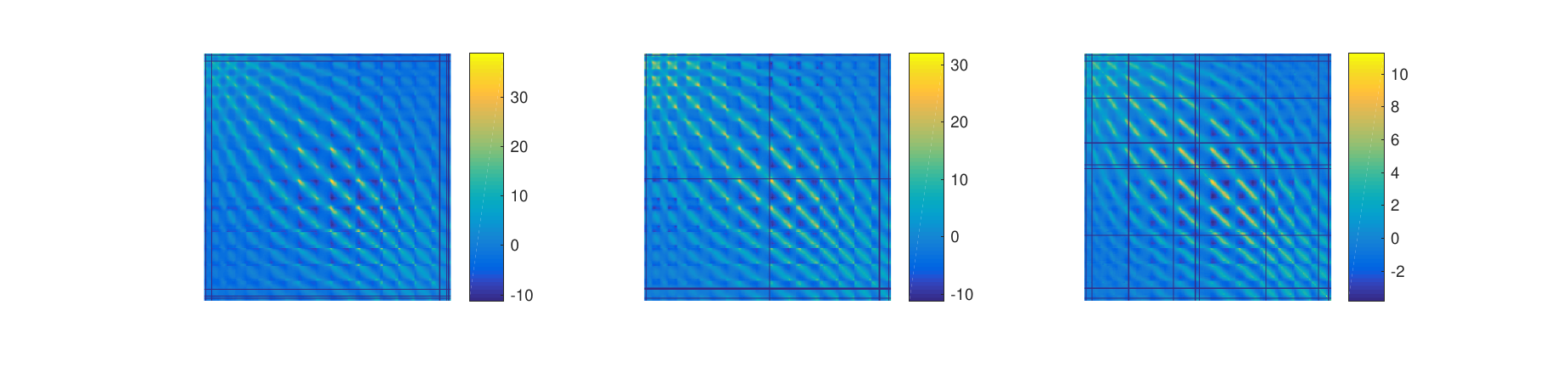}
\caption[]{Covariance matrices of the signal detected by the MEG sensors from three different subjects of the Human Connectome Project. The size of the matrices is $248 \times 248$. The dark blue bands represent missing data, which are due to the exclusion of some channels after a quality check of the signal.}
\label{fig:covariances}
\end{figure}

Covariance matrices representing functional connectivity can be computed from the signals arising from functional imaging modalities. The choice of a specific functional imaging modality is generally driven by the preference to have high spatial resolution signals, and thus high spatial resolution covariance matrices, versus high temporal resolution, and thus the possibility to study the temporal dynamic of the covariance matrices. Functional Magnetic Resonance falls in the first category, while  Electroencephalogram (EEG) and Magnetoencephalography (MEG) in the second. However, high temporal resolution does generally come at the price of indirect measurements and, as shown in Figure~\ref{fig:forwardoperator} in the case of MEG data, the signals are in practice detected on the \textit{sensors space}. It is however of interest to produce results on the associated signals on the cerebral cortex, which we will refer to as \textit{brain space}. The signals on the brain space are functional images whose domain is the geometric representation of the brain and are associated with the neuronal activity on the cerebral cortex. We borrow here the notion of brain space and sensors space from \cite{johnstone1990} and we use it throughout the paper for convenience, however it is important to highlight that the formulation of the problem is much more general than the setting of this specific application.

The signals on the brain space are related to the signals on the sensors space by a forward operator, derived from the physical modeling of the electrical/magnetic propagation, from the cerebral cortex to the sensors. This is generally referred to as the forward problem. For soft-field methods like EEG, MEG and Functional Near-Infrared Spectroscopy \citep{Mosher1999, Eggebrecht2014, Ferrari2012, Singh2014, Ye2009}, the forward operator is defined through the solution to a partial differential equation of diffusion type. Such a mapping induces a strong degree of smoothing and consequently the corresponding inverse problem, i.e. the reconstruction of a signal on the brain space from observations in the sensors space, is strongly ill-posed. In fact, signals with fairly different intensities on the brain space, due to the diffusion effect, result in signals with similar intensities in the sensors space. In Figure~\ref{fig:forwardoperator}, we show an example of a signal on the brain space and the associated signal on the sensors space.

From a practical perspective, it is crucial to understand how the different parts of the brain interact, which is sometimes known as functional connectivity. A possible way to understand these interactions is by analyzing the covariance function associated with the signals describing the cerebral activity of an individual on the brain space \citep{Fransson2011, Lee2013, Li2009}. More recently, the interest has shifted from this static approach to a dynamic approach. In particular, for a single individual, it is of interest to understand how these covariance functions vary in time. This is a particularly active field, known as dynamic functional connectivity \citep{Hutchison2013}. Another element of interest is understanding how these covariance functions vary among individuals. In Figure~\ref{fig:covariances}, we show the covariance matrices, on the sensors space, computed from the MEG signals of three different subjects.

The remainder of this paper is organized as follows. In Section~\ref{sec:intro_math} we give a formal description of the problem. We then introduce a model for indirectly observed smooth functional images in Section~\ref{sec:PCfunctions} and present the more general model associated with Problem~\ref{eq:cov_model_intro} in Section~\ref{sec:PCcovariances}. In Section~\ref{sec:simulations}, we perform simulations to assess the validity of the estimation framework. In Section~\ref{sec:application} we apply the proposed models to MEG data and we finally give some concluding remarks in Section~\ref{sec:discussion}.

\section{Mathematical description of the problem}\label{sec:intro_math}
We now introduce the problem using our driving application as an example. To this purpose, let $\M$ a be a closed smooth two-dimensional manifold embedded in $\R^3$, which in our application represents the geometry of the cerebral cortex. An example of such a surface is shown on the top right of Figure~\ref{fig:forwardoperator}. We denote with $L^2(\M)$ the space of square integrable functions on $\M$. Define $X$ to be a random function with values in a Hilbert functional space $\calF \subset L^2(\M)$ with mean $\mu = \E[X]$, finite second moment, and assume the continuity and square integrability of its covariance function $C_X(v, v') = \E[(X(v)-\mu(v))(X(v')-\mu(v'))]$. The associated covariance operator $\calC_X$ is defined as $\calC_X g = \int_\M C_X(v,v')g(v)dv$, for all $g \in L^2(\M)$.
Mercer’s Lemma \citep{riesz1955} guarantees the existence of a non-increasing sequence $\{\gamma_r\}$ of eigenvalues of $\calC_X$ and an orthonormal sequence of corresponding eigenfunctions $\{\psi_r\}$, such that
\begin{equation}\label{eq:cov_eigen}
C_X(v,v') = \sum_{r=1}^{\infty} \gamma_r \psi_r(v)\psi_r(v'), \qquad  \forall v,v' \in \M.
\end{equation}

As a direct consequence, $X$ can be expanded\footnote{More precisely, we have that $\lim_{R \ra \infty} \sup_{v \in \M} \E \{X(v) - \mu(v) - \sum_{r=1}^{R} \zeta_{r} \psi_r(v)\}^2 = 0$, i.e. the series converges uniformly in mean-square.} as $X = \mu + \sum_{r=1}^{\infty} \zeta_r \psi_r$, where the random variables $\{\zeta_r\}$ are uncorrelated and are given by $\zeta_r = \int_\M \{ X(v) - \mu(v) \} \psi_r(v)dv$. The collection $\{\psi_r\}$ defines the modes of variation of the random function $X$, in descending order of strength, and these are called Principal Component (PC) functions. The associated random variables $\{\zeta_r\}$ are called PC scores. Moreover, the defined PC functions are the best finite basis approximation in the $L^2$-sense, therefore for any fixed $R \in \mathbb{N}$, the first $R$ PC functions of $X$ minimize the reconstruction error, i.e.
\begin{equation}\label{eq:fPCA_minimization}
{\{\psi_r\}}_{r=1}^R =
\argmin_{(\{\phi_r\}_{r=1}^R: \langle \phi_r, \phi_{r'} \rangle = \delta_{rr'})}
\mathbb{E}\int_{\mathcal{M}}\bigg\{X(v) - \mu(v) - \sum_{r=1}^R \langle X-\mu, \phi_r \rangle \phi_r(v) \bigg\}^2 dv,
\end{equation}
where $\langle \cdot, \cdot \rangle$ denotes the $L^2(\M)$ inner product and $\delta_{rr'}$ is the Kronecker delta; i.e. $\delta_{rr'}=1$ for $r=r'$ and $0$ otherwise. 

\subsection{The case of indirectly observed functions}\label{sec:ismpca_intro}
In the case of indirect observations, the signal is detectable only through $s$ sensors on the sensors space. Let $\{K_l: l=1,\ldots,m\}$ be a collection of $s \times p$ real matrices, representing the potentially sample-specific forward operators relating the signal at $p$ pre-defined points $\{v_j: j=1,\ldots,p\}$ on the cortical surface $\M$ with the signal captured by the $s$ sensors. The matrices $\{K_l\}$ are discrete versions of the forward operator $\calK$ introduced in Section~\ref{sec:intro}. Moreover, define the evaluation operator $\Psi: \calF \ra \R^{p}$ to be a vector-valued functional that evaluates a function $f \in \calF$ at the $p$ pre-specified points $\{v_j\} \subset \M$, returning the $p$ dimensional vector $(f(v_1), \ldots, f(v_p))^T$.  The operators $\Psi$ and $\{K_l\}$ are known. However, in the described problem the random function $X$ can be observed  only through indirect measurements $\{\vect{y}_l \in \R^s: l = 1,\ldots,m \}$ generated from the model
\begin{align}\label{eq:model}
\begin{cases}
x_l &= \mu + \sum_{r=1}^{\infty} \zeta_{l,r} \psi_r\\
\vect{y}_l &= K_l \Psi x_l + \bm{\epsilon}_l, \qquad l=1,\ldots,m
\end{cases}
\end{align}
where $\{x_l\}$ are $m$  independent realizations of $X$, and thus expandible in terms of the PC functions $\{\psi_r\}$ and the coefficients $\{\zeta_{l,r}\}$ given by $\zeta_{l,r} = \int_\M \{ x_l(v) - \mu(v) \} \psi_r(v)dv$. The terms $\{\bm{\epsilon}_l\}$ represent observational errors and are independent realizations of a $s$-dimensional normal random vector, with mean the zero vector and variance $\sigma^2 I_{p}$, where $I_{p}$ denotes the $p$-dimensional identity matrix.

\begin{figure}[!htb] % figuur 1
%\vspace{6pc}
\centering
\includegraphics[width=0.6\textwidth]{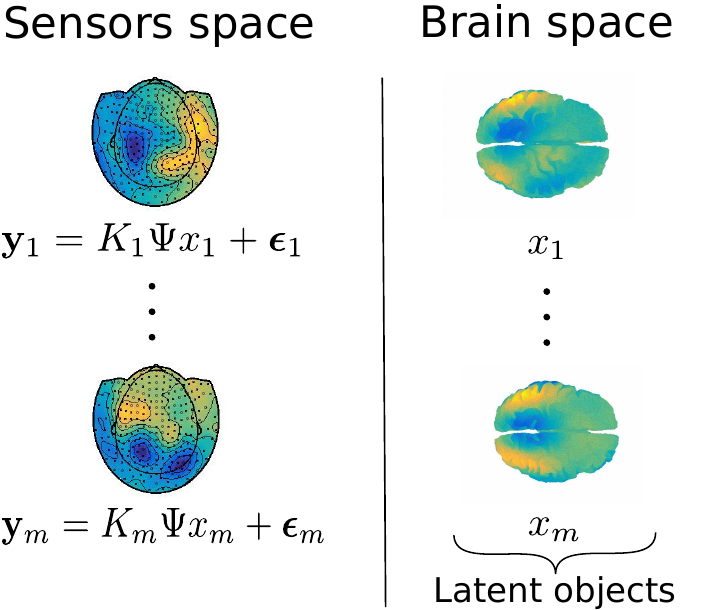}
\caption[]{Illustration of the setting introduced with model (\ref{eq:model}). }
\label{fig:illustration_func}
\end{figure}

We consider the problem of estimating the PC functions $\{\psi_r\}$ in (\ref{eq:model}), and associated scores $\{\zeta_{l,r}\}$, from the observations $\{\vect{y}_l\}$. In Figure~\ref{fig:illustration_func} we give an illustration of the introduced setting. Note that it would not be necessary to define the evaluation operator if the forward operators were defined to be functionals $\{\calK_l: \calF \ra \R^{p}\}$, relating directly the functional objects on the brain space to the real vectors on the sensors space. It is however the case that the operators $\{K_l\}$ are computed in a matrix form by third party software (see Section~\ref{sec:application} for details) for a pre-specified set of points $\{v_j\} \subset \M$ and it is thus convenient to take this into account in the model through the introduction of an evaluation operator $\Psi$.

In the case of single subject studies, the surface $\M$ is the subject's reconstructed cortical surface, an example of which is shown on the right panel of Figure~\ref{fig:forwardoperator}. In this case, it is natural to assume that there is one common forward operator $K$ for all the detected signals. In the more general case of multi-subject studies, $\M$ is assumed to be a template cortical surface. We are thus assuming that the individual cortical surfaces have been registered to the template $\M$, which means that there is a smooth and one-to-one correspondence between the points on each individual brain surface and the template surface $\M$, where the PC functions are defined.

However, notice that when it comes to the computation of the forward operators, we are not assuming the brain geometries of the single subjects to be all equal to a geometric template, as in fact the model in (\ref{eq:model}) allows for sample-specific forward operators $\{K_l\}$. The individual cortical surfaces could also have different number of mesh points, in that case the subject-specific `resampling' operator could be absorbed into the definition of sample-specific evaluation operators $\{\Psi_l\}$.

The estimation of the PC functions in (\ref{eq:model}) has been classically dealt with by reconstructing each observation $x_l$ independently and subsequently performing PCA. However, such an approach can be sub-optimal in particular in a low signal-to-noise setting, as when estimating one signal, the information from all the other sampled signals is systematically ignored. The statistical analysis of data samples that are random functions or surfaces, i.e. functional data, has also been explored in the Functional Data Analysis (FDA) literature \citep{ramsay2005}, however, most of those works focus on the setting of fully observed functions. An exception to this is the sparse FDA literature \citep[see e.g.][]{Yao2005}, where instead the functional samples are assumed to be observable only through irregular and noisy evaluations.

In the case of direct but noisy observations of a signal, previous works on statistical estimation of the covariance function, and associated eigenfunctions, have been made, for instance, in \cite{bunea2015} for regularly sampled functions and in \cite{Yao2005} and \cite{huang2008} for sparsely sampled functions. A generalization to functions whose domain is a manifold is proposed in \cite{Lila2016} and appropriate spatial coherence is introduced by penalizing directly the eigenfunctions of the covariance operator to be estimated, i.e. the PC functions. In the indirect observations setting, \cite{Tian2012} propose a separable model in time and space for source localization. The estimation of PC functions of functional data in a linear space and on linear domains, from indirect and noisy samples, has been previously covered in \cite{Amini2012}. They propose a regularized M-estimator in a Reproducing Kernel Hilbert Space (RKHS) framework. Due to the fact that in practice the introduction of a RKHS relies on the definition of a kernel, i.e. a covariance function on the domain, this approach cannot be easily extended to non-linear domains. In \cite{Katsevich2015}, driven by an application to cryo-electron microscopy, the authors propose an unregularized estimator for the covariance matrix of indirectly observed functions. However, a regularized approach is crucial in our setting, due to the strong ill-posedness of the inverse problem considered. In the discrete setting, also other forms of regularization have been adopted, e.g. sparsity on the inverse covariance matrix \citep{Friedman2008,Liu2019}.

\subsection{The case of indirectly observed covariance operators}\label{sec:ismcovpca_intro}

A natural generalization of the setting introduced in the previous section is considering observations that have group specific covariance operators. In detail, suppose now we are given a set of $n$ covariance functions $\{C_i: i=1,\ldots,n\}$, representing the underlying covariance operators $\{\calC_i: i=1,\ldots,n\}$ on the brain space. In our driving application, each covariance function $C_i:\M \times \M \ra \R$ describes the functional connectivity of the $i$th individual or the functional connectivity of the same individual at the $i$th time-point. 

\begin{figure}[!htb] % figuur 1
%\vspace{6pc}
\centering
\includegraphics[width=0.6\textwidth]{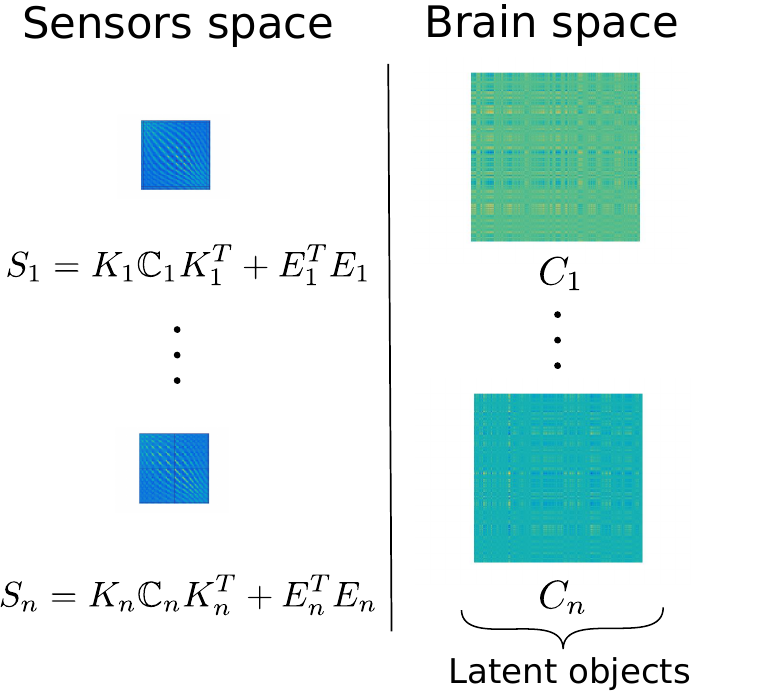}
\caption[]{Illustration of the setting introduced with model (\ref{eq:cov_brain_sensors}).}
\label{fig:illustration_cov}
\end{figure}

We consider the problem of defining and estimating a set of covariance functions, that we call PC covariance functions, which enable the description of $\{C_i\}$ through the `linear combinations' of few components. Such a reduced order description is of interest, for example, in understanding how functional connectivity varies among individuals or over time.

We define a model for the PC covariance functions of $\{C_i\}$ from the set of indirectly observed covariance matrices, computed from the signals on the sensors space, and thus given by $\{S_i\in \R^{s \times s}, i =1,\ldots,n \}$ with
\begin{equation}\label{eq:cov_brain_sensors}
S_i = K_i \mathbb{C}_i K_i^T + E_i^T E_i, \qquad i=1,\ldots,n,
\end{equation}
where $\mathbb{C}_i = (C_i(v_j, v_{j'}))_{jj'}$, and $\{v_j: j=1,\ldots,p\}$ are the sampling points associated with the operator $\Psi$.
The forward operators $\{K_i\}$ act on both sides of the covariance functions $\{C_i\}$, due to the linear transformation $K_i \Psi$ applied to the signals on the brain space before being detected on the sensors space. The term $E_i^T E_i$ is an error term, where $E_i$ is a $s \times s$ matrix such that each entry is an independent sample of a Gaussian distribution with mean zero and standard deviation $\sigma$. Model (\ref{eq:cov_brain_sensors}) could be regarded as an implementation of the idealized Problem~\ref{eq:cov_model_intro}, where the covariance operators are represented by the associated covariance functions. An illustration of the setting introduced can be found in Figure~\ref{fig:illustration_cov}.

The problem introduced in this section has not been extensively covered in the literature. In the discrete case, \cite{Dryden2009} introduce a tangent PCA model for directly observed covariance matrices. An extension to directly observed covariance operators has been proposed in \cite{Pigoli2014}. Also related to our work is the setting considered in \cite{Petersen2019}, where the authors propose a regression framework for responses that are random objects (e.g. covariance matrices) with Euclidean predictors. The proposed regression model is applied to study associations between age and low-dimensional correlation matrices, representing functional connectivity, which have been computed from a parcellation of the brain. In Section~\ref{sec:PCcovariances}, we propose a novel PCA approach for indirectly observed high-dimensional covariance matrices.

\section{Principal components of indirectly observed functions}\label{sec:PCfunctions}

The aim of this section is to define a model for the estimation of the PC functions $\{\psi_r\}$ from the observations $\{\vect{y}_l\}$, defined in (\ref{eq:model}). Although the model proposed in this section is not the main contribution of this work, it allows us to introduce some of the concepts necessary to the definition of the more general model for indirectly observed covariance functions in Section~\ref{sec:PCcovariances}.

\subsection{Model}
Let now $\vect{z} = (z_1,\ldots,z_m)^T$ be a $m$-dimensional
real column vector and $H^2(\M)$ be the Sobolev space of functions in $L^2(\M)$ with first and second distributional derivatives in $L^2(\M)$. From now on $\calF$ is instantiated with $H^2(\M)$. We propose to estimate  $\hat{f} \in H^2(\M)$, the first PC function of $X$, and the associated PC scores vector $\vect{z}$, by solving the equation
\begin{equation}\label{eq:model_f_pca}
(\hat{\vect{z}},\hat{f}) = \argmin \limits_{\vect{z} \in \R^m,f \in H^2(\M)} \sum \limits_{l=1}^m \| \vect{y}_l - z_l K_l \Psi f \|^2 + \lambda \vect{z}^T\vect{z} \int_{\mathcal{M}} \! \Delta^2_{\mathcal{M}} f,
\end{equation}
where $\| \cdot \|$ is the Euclidean norm and $\Delta$ is the Laplace-Beltrami operator, which enables a smoothing regularizing effect on the PC function $\hat{f}$. The data fit term encourages $K_l \Psi f$ to capture the strongest mode of variation of $\{\vect{y}_l\}$. The parameter $\lambda$ controls the trade-off between the data fit term of the objective function and the regularizing term. The second PC function can be estimated by classical deflation methods, i.e. by applying model (\ref{eq:model_f_pca}) on the residuals $\{\vect{y}_l - \hat{z}_l K_l \Psi \hat{f}\}$, and so on for the subsequent PCs. The proposed model can be interpreted as a regularized least square estimation of the first PC function $\psi_1$ in (\ref{eq:model}), with the terms $\{z_l\}$ playing the role of estimates of the variables $\{\zeta_{l,1}\}$.

In the simplified case of a single forward operator $K := K_1 = \ldots = K_m$, the minimization problem (\ref{eq:model_f_pca}) can be reformulated in a more classical form. In fact, fixing $f$ in (\ref{eq:model_f_pca}) and minimizing over $\vect{z}$ gives
\begin{equation}\label{eq:reformulation_eigenvalues}
z_l = \frac{\vect{y}_l^T K \Psi f}{\| K \Psi f \|^2 + \lambda \int_{\mathcal{M}} \! \Delta^2_{\mathcal{M}} f }, \qquad l=1,\ldots,m,
\end{equation}
which can then be used to show that the minimization problem (\ref{eq:model_f_pca}) is equivalent to maximizing
\begin{equation}\label{eq:reformulation_eigenvector}
\frac{ (\Psi f)^T K^T \mathbb{Y}^T \mathbb{Y} K (\Psi f)}{\| K \Psi f\|^2 + \lambda \int_{\mathcal{M}} \! \Delta^2_{\mathcal{M}} f},
\end{equation}
with $\mathbb{Y}$ a $m \times s$ real matrix, where the $l$th row of $\mathbb{Y}$ is the observation $\vect{y}^T_l$. This reformulation gives further insights on the interpretation of $\hat{f}$ in (\ref{eq:model_f_pca}). In fact, $\hat{f}$ is such that $K \Psi \hat{f}$ maximizes $(K \Psi \hat{f})^T \{\frac{1}{m} \mathbb{Y}^T \mathbb{Y} \}(K \Psi \hat{f})$ subject to a norm constraint. The term $\{\frac{1}{m} \mathbb{Y}^T \mathbb{Y} \}$ is the empirical covariance matrix in the sensors space. The term $\vect{z}^T\vect{z}$ in (\ref{eq:model_f_pca}) places the regularization term $\lambda \int_{\mathcal{M}} \! \Delta^2_{\mathcal{M}} f$ in the  denominator of the equivalent formulation (\ref{eq:reformulation_eigenvector}). Thus, $\hat{f}$ is regularized by the choice of norm in the denominator of (\ref{eq:reformulation_eigenvector}), in a similar fashion to the classic functional principal component formulation of \cite{silverman1996}. Ignoring the spatial regularization, the point-wise evaluation of the PC function $\Psi f$ in (\ref{eq:reformulation_eigenvector}) can be interpreted as the first PC vector computed from the dataset of backprojected data $[K_1^T \vect{y}_1, \ldots, K^T_m\vect{y}_m]^T$, similarly to what is proposed in \cite{Dobriban2017} in the context of optimal prediction.

\subsection{Algorithm}\label{sec:Algorithm_PCfunctions}
Here we propose a minimization approach for the objective function in (\ref{eq:model_f_pca}), which we approach by alternating the minimization of $\vect{z}$ and $f$ in an iterative algorithm. In (\ref{eq:model_f_pca}), a normalization constraint must be considered to make the representation unique, as in fact multiplying $\vect{z}$ by a constant and dividing $f$ by the same constant does not change the objective function. We optimize in $\vect{z}$ under the constraint $\|\vect{z}\| = 1$, which leads to a normalized version of the estimator (\ref{eq:reformulation_eigenvalues})
\begin{equation}\label{eq:reformulation_eigenvalues_normalized}
z_l = \frac{\vect{y}_l^T K_l \Psi f}{\sqrt{ \sum_{l=1}^{m}\vect{y}_l^T K_l \Psi f }}, \qquad l=1,\ldots,m.
\end{equation}

For a given $\vect{z}$, solving (\ref{eq:model_f_pca}) with respect to $f$ will turn out to be equivalent to solving an inverse problem, which we discretize adopting a Mixed Finite Elements approach \citep{Azzimonti2014}. Specifically, consider now a triangulated surface $\mathcal{M}_\mathcal{T}$, union of the finite set of triangles $\mathcal{T}$, giving an approximated representation of the manifold $\mathcal{M}$. We then consider the linear finite element space $V$ consisting of a set of globally continuous functions over $\mathcal{M}_\mathcal{T}$ that are affine where restricted to any triangle $\tau$ in $\mathcal{T}$, i.e.
\begin{equation*}
V = \{ v \in C^0(\mathcal{M}_\mathcal{T}): v|_{\tau} \text{ is affine for each } \tau \in \mathcal{T} \}.
\end{equation*}

This space is spanned by the nodal basis $\phi_1, \ldots, \phi_\kappa$  associated with the nodes $\xi_1,\ldots,\xi_\kappa$, corresponding to the vertices of the triangulation $\mathcal{M}_\mathcal{T}$. Such basis functions are Lagrangian, meaning that $\phi_i(\xi_j)=1$ if $i=j$ and $\phi_i(\xi_j)=0$ otherwise. Setting $\vect{c} = (f(\xi_1),\ldots,f(\xi_\kappa))^T$ and $\pmb{\phi} = (\phi_1,\ldots,\phi_\kappa)^T$, every function $f \in V$ has the form
\begin{equation}\label{eq:basis}
f(v) = \sum_{k=1}^\kappa f(\xi_k) \phi_k(v) = \vect{c}^T \pmb{\phi}(v)
\end{equation}
for all $v \in \mathcal{M}_\mathcal{T}$. To ease the notation, we assume that the $p$ points $\{v_j\}$ associated with the evaluation operator $\Psi$ coincide with the nodes of the triangular mesh $\xi_1,\ldots,\xi_\kappa$, and thus we have that the coefficients $\vect{c}$ are such that $\vect{c} = \Psi f$ for any $f \in V$. Consequently, we are assuming the forward operators $\{K_l\}$ to be $s \times \kappa$ matrices, relating the $\kappa$ points on the cortical surface of the $i$th sample, in one-to-one correspondence to $\xi_1,\ldots,\xi_\kappa$, to the $s$-dimensional signal detected on the sensors for the $i$th sample.

Let now $M$ and $A$ be the mass and stiffness $\kappa \times \kappa$ matrices defined as $(M)_{jj'} =\int_{\mathcal{M}_{\mathcal{T}}} \phi_j \phi_{j'}$ and $(A)_{jj'}=\int_{\mathcal{M}_\mathcal{T}} \nabla_{\mathcal{M}_\mathcal{T}} \phi_j \cdot \nabla_{\mathcal{M}_\mathcal{T}} \phi_{j'}$, where $\nabla_{\mathcal{M}}$ is the gradient operator on the manifold $\M$. Practically, $\nabla_{\mathcal{M}_\mathcal{T}} \phi_{j}$ is a constant function on each triangle of $\mathcal{M}_\mathcal{T}$, and can take an arbitrary value on the edges\footnote{Formally, these are weak derivatives, hence uniquely defined almost everywhere (i.e. up to a set of measure zero) and are always evaluated in an integral form \citep[see][for further details]{Dziuk2013}.}.

Let $h = \max_{\tau \in \calT}(diam(\tau))$ denote the maximum diameter of the triangles forming $\M_\calT$, then the solution $\hat{f}_h$ of (\ref{eq:model_f_pca}), in the discrete space $V$, is given by the following proposition.

\begin{proposition}\label{prop:FE_func}
The Surface Finite Element solution $\hat{f}_h \in V$ of model (\ref{eq:model_f_pca}), for a given unitary norm vector $\vect{z}$, is $\hat{f}_h = \vect{\hat{c}}^T \pmb{\phi}$ where $\vect{\hat{c}}$ is the solution of
\begin{equation}\label{eq:model_f_pca_discrete}
\vect{\hat{c}} = (\sum_{l=1}^m z_l^2 K_l^TK_l + \lambda A M^{-1} A)^{-1} \sum_{l=1}^m z_l K_l^T \vect{y}_l.
\end{equation}
\end{proposition}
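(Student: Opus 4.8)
The plan is to adopt the mixed Surface Finite Element strategy of \cite{Azzimonti2014}. For the fixed unit-norm vector $\vect z$ we have $\vect z^T\vect z = 1$, so the functional to be minimized over $f$ is $\sum_{l=1}^m\|\vect y_l - z_l K_l\Psi f\|^2 + \lambda\int_{\mathcal M}(\Delta_{\mathcal M} f)^2$. The obstacle to discretizing this directly is that the second-order penalty is not defined on the linear finite element space $V$, whose elements are merely $C^0$ and whose Laplace--Beltrami field is only a distribution. I would therefore first recast the problem in mixed form: introduce an auxiliary function $g$ standing for $\Delta_{\mathcal M} f$, rewrite the penalty as $\lambda\int_{\mathcal M} g^2$, and couple $g$ to $f$ through the weak identity $\int_{\mathcal M} g\,v + \int_{\mathcal M}\nabla_{\mathcal M} f\cdot\nabla_{\mathcal M} v = 0$ for all admissible test functions $v$. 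Since $\mathcal M$ is a closed manifold, this integration-by-parts identity carries no boundary terms, which keeps the coupled system clean.

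Next I would discretize on the triangulation $\mathcal M_{\mathcal T}$, writing $f=\vect c^T\pmb\phi$ and $g=\vect d^T\pmb\phi$ with $\vect c,\vect d\in\R^\kappa$, and enforcing the weak coupling only against the nodal basis functions $\phi_1,\dots,\phi_\kappa$. By the definitions of the mass matrix $M$ and stiffness matrix $A$ this reads $M\vect d + A\vect c = 0$, hence $\vect d = -M^{-1}A\vect c$ ($M$ being the symmetric positive definite mass matrix, it is invertible), and the discrete penalty collapses to $\lambda\,\vect d^T M\vect d = \lambda\,\vect c^T A M^{-1}A\vect c$, using the symmetry of $M$ and $A$. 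For the data term, the assumption that the evaluation points $\{v_j\}$ coincide with the mesh vertices $\{\xi_k\}$ gives $\Psi f=\vect c$, so $\|\vect y_l - z_l K_l\Psi f\|^2 = \|\vect y_l - z_l K_l\vect c\|^2$. The problem is thus reduced to the finite-dimensional quadratic minimization $\min_{\vect c\in\R^\kappa}\ \sum_{l=1}^m\|\vect y_l - z_l K_l\vect c\|^2 + \lambda\,\vect c^T A M^{-1}A\vect c$.

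Finally, setting the gradient in $\vect c$ to zero and using the symmetry of $A M^{-1}A$ yields the normal equations $\big(\sum_{l=1}^m z_l^2 K_l^T K_l + \lambda A M^{-1}A\big)\vect c = \sum_{l=1}^m z_l K_l^T\vect y_l$, which is precisely (\ref{eq:model_f_pca_discrete}); since $A M^{-1}A$ is positive semidefinite and $\lambda>0$, the system matrix is positive definite as soon as the data part controls the (one-dimensional, constant) null direction of $A M^{-1}A$, giving a unique $\vect{\hat{c}}$ and hence $\hat f_h = \vect{\hat{c}}^T\pmb\phi$. The step I expect to require the most care is the first one: setting up the mixed formulation so that eliminating the auxiliary variable via $M\vect d = -A\vect c$ faithfully reproduces the $H^2$ penalty in the discrete space (equivalently, that $g\mapsto\int g^2$ subject to the weak constraint $g=\Delta_{\mathcal M}f$ is the correct surrogate on $V$), together with the boundary-term-free integration by parts on the closed surface; everything after that is routine linear algebra.
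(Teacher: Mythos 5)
Your proposal is correct and follows essentially the same route as the paper: the mixed formulation with an auxiliary variable for $\Delta_{\mathcal M}f$, boundary-term-free integration by parts on the closed surface, and elimination via $M\vect d=-A\vect c$ to produce the penalty matrix $AM^{-1}A$ and the stated normal equations. The only cosmetic difference is that you discretize the objective and then take the gradient, whereas the paper first writes the continuous weak optimality system and then discretizes the resulting coupled equations; for this quadratic problem the two orders commute and yield the same linear system, including the same positive-definiteness caveat about constant functions.
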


Equation (\ref{eq:model_f_pca_discrete}) has the form of a penalized regression, where the discretized version of the penalty term is $A M^{-1} A$.

\begin{algorithm}[H]
\caption{Inverse Problems - PCA Algorithm}\label{alg:InversefPCA}
\begin{algorithmic}[1]
\State Initialization:
\begin{enumerate}[label=(\alph*)]
\item Computation of $M$ and $A$
\item Initialize $\vect{z}$, the scores vector associated with the first PC function
\end{enumerate}
\State PC function's estimation:

Compute $\vect{c}$ such that %
\begin{equation*}
\bigg(\sum_{l=1}^m z_l^2 K_l^TK_l + \lambda A M^{-1} A \bigg) \vect{c} = \sum_{l=1}^m z_l K_l^T \vect{y}_l
\end{equation*}
\begin{equation*}
f_h \la \vect{{c}}^T \pmb{\phi}
\end{equation*}
\State Scores estimation:
\begin{equation*}
z_l \la \frac{\vect{y}_l^T K_l \Psi f_h}{\sqrt{ \sum_{l=1}^{m}\vect{y}_l^T K_l \Psi f_h }}, \qquad l = 1,\ldots,m
\end{equation*}

\State Repeat Steps 2--3 until convergence
\end{algorithmic}
\end{algorithm}
The sparsity of the linear system (\ref{eq:model_f_pca_discrete}), namely the number of zeros, depends on the sparsity of its components. The matrices $M$ and $A$ are very sparse, however $M^{-1}$ is not, in general. To overcome this problem, in the numerical analysis of Partial Differential Equations literature, the matrix $M^{-1}$ is generally replaced with the sparse matrix $\tilde{M}^{-1}$, where $\tilde{M}$ is the diagonal matrix such that $\tilde{M}_{jj} = \sum_{j'} M_{jj'}$ \citep{Fried1975,Zienkiewicz2013}. The penalty operator $A \tilde{M}^{-1} A$ approximates very well the behavior of $A M^{-1} A$.

Moreover, in the case of longitudinal studies that involve only one subject, we have a single forward operator $K := K_1 = \ldots = K_m$ common to all the observed signals, and consequently equation (\ref{eq:model_f_pca_discrete}) can be rewritten as the sparse overdetermined system
\begin{equation}\label{eq:model_f_pca_ls}
	\begin{bmatrix}
		K&\\
		\sqrt{\lambda}{\tilde{M}}^{-\nicefrac{1}{2}} A
	\end{bmatrix}
		\vect{c}
=
	\begin{bmatrix}
		Y^T\vect{z}\\
		\vect{0}
	\end{bmatrix},
\end{equation}
to be interpreted in a least-square sense. A sparse QR solver can be finally applied to efficiently solve the linear system (\ref{eq:model_f_pca_ls}).

In Algorithm~\ref{alg:InversefPCA} we summarize the main algorithmic steps to compute the PC functions and associated PC scores for indirectly observed functions. The initializing scores $\vect z$ can be chosen either at random or, when there is a correspondence between the detectors of different samples (e.g. $K_1 = \ldots = K_m$), with the scores obtained by performing PCA on the observations in the sensors space.

\subsection{Eigenfunctions of indirectly observed covariance operators}\label{sec:eig_cov_op}
Suppose now we are in the case of a single forward operator $K$. Combining Steps 2--3 of Algorithm~\ref{alg:InversefPCA}, and moving the normalization step from $(z_l)$ to $f_h$, we obtain the iterations
\begin{align*}
&(K^T K + \lambda A M^{-1} A ) \vect{c} = K^T \sum_{l=1}^m (\vect{y}_l \vect{y}^T_l) K \Psi f_h\\
&f_h \la \vect{{c}}^T \pmb{\phi}; \,f_h \ra \frac{f_h}{\|f_h\|}.
\end{align*}
The obtained algorithm depends on the data only through $\sum_{l=1}^m \vect{y}_l \vect{y}^T_l$ that up to a constant is the covariance matrix computed on the sensors space. The proposed algorithm can thus be applied to situations where the observations $\{\vect{y}_l\}$ are not available, but we are given only the associated $s \times s$ covariance matrix on the sensors space, computed from $\{\vect{y}_l\}$. This could be of interest in situations where the temporal resolution is very high and the spatial resolution is low, therefore it is convenient to store the covariance matrix rather than the entire set of observations.

\section{Reconstruction and representation of indirectly observed covariance operators}\label{sec:PCcovariances}
Consider now $n$ sample covariance matrices $S_1,\ldots,S_n$, each of size $s \times s$, representing $n$ different connectivity maps on the sensors space. Three of such covariance matrices, associated with three different individuals, are shown in Figure~\ref{fig:covariances}. Recall moreover that we denote with $\M$ the brain surface template and with $\{K_i \in \R^{s \times p}\}$ the set of subject-specific forward operators, relating the signal at the $p$ pre-specified points $\{v_j\}$ on the cortical surface $\M$ with the signal detected on the $s$ sensors.

The aim of this section is to introduce a model for the reconstruction and representation of the covariance functions $\{C_i\}$, on the brain space, associated with the actually observed covariance matrices $\{S_i\}$, on the sensors space. The matrices $\{S_i\}$ are related to the covariance functions $\{C_i\}$ through formula (\ref{eq:cov_brain_sensors}) that we recall here being
\begin{equation*} %\label{eq:cov_brain_sensors2}
S_i = K_i \mathbb{C}_i K_i^T + E_i^T E_i, \qquad i=1,\ldots,n,
\end{equation*}
with $\mathbb{C}_i = (C_i(v_j, v_l))_{jl}$, and $\{v_j\}$ the sampling points associated with the operator $\Psi$.

First, in Section~\ref{sec:PCcov_subj}, we see how the PC model introduced in Section~\ref{sec:PCfunctions} could be applied to individually reconstruct the covariance functions $\{C_i\}$. In Section~\ref{sec:PCcov_pop}, we introduce a population model that achieves both reconstruction and joint representation of the underlying covariance functions $\{C_i\}$.

\subsection{A subject-specific model}\label{sec:PCcov_subj}
Let $S_i^{\nicefrac{1}{2}}$ be a square-root decomposition of $S_i$, which is a decomposition such that $S_i = (S_i^{\nicefrac{1}{2}})^T S_i^{\nicefrac{1}{2}}$, for all $i=1,\ldots,n$. This could be given, for instance, by $S_i^{\nicefrac{1}{2}} = D_i^{\nicefrac{1}{2}} V_i^T$ where $S_i = V_iD_iV_i^T$ is the spectral decomposition of $S_i$ and $D_i^{\nicefrac{1}{2}}$ denotes the diagonal matrix whose entries are the square-root of the (non-negative) entries of $D_i$. Each square-root decomposition $S_i^{\nicefrac{1}{2}}$ can be interpreted as a data-matrix whose empirical covariance is $S_i$. Another possible choice for the square-root decompositions is $S_i^{\nicefrac{1}{2}} = V_i D_i^{\nicefrac{1}{2}} V_i^T$. The output of the proposed algorithms will not depend on the specific choice of the square-root decompositions.

In the most general setting, each covariance matrix $S_i$ is an indirect observation of an underlying covariance function $C_i$, which can be expressed in terms of its spectral decomposition as
\[
C_i(v,v') = \sum_{r=1}^\infty \gamma_{ir} \psi_{ir}(v)\psi_{ir}(v'), \qquad \forall v,v' \in \M,
\]
where, for each $i$, $\gamma_{i1} \geq \gamma_{i2} \geq \cdots \geq 0$ is a sequence of non-increasing variances and $\{\psi_{ir}\}_r$ a set of orthonormal eigenfunctions.
Introduce now $\{\hat{f}_{i} \in H^2(\M)\}$ and $\{\vect{\hat{z}}_i \in \R^s\}$, obtained by applying model (\ref{eq:model_f_pca}) to each sample independently, i.e.
\begin{equation}\label{eq:model_cov_pca}
\left\{(\hat{\vect{z}}_i,\hat{f}_i)\right\}_i = \argmin \limits_{\{\vect{z}_i\} \subset \R^s,\{f_i\} \subset H^2(\M)} \|S_i^{\nicefrac{1}{2}} - \vect{z}_i (K_i \Psi f_i)^T \|^2_F + \lambda \| \vect{z}_i\|^2 \int_{\mathcal{M}} \! \Delta^2_{\mathcal{M}} f_i, \qquad i=1,\ldots,n,
\end{equation}
with $\| \cdot \|_F$ denoting the Frobenius matrix norm. Each estimate $\hat{f}_i$, from model (\ref{eq:model_cov_pca}), can be interpreted as a regularized estimate of the leading PC function of $S_i^{\nicefrac{1}{2}}$ and thus of the eigenfunction $\psi_{i1}$. The subsequent eigenfunctions can be estimated by deflation methods, i.e. by removing the estimated components $\vect{\hat z}_i (K_i \Psi \hat f_i)^T$ from $S_i^{\nicefrac{1}{2}}$ and reapplying model (\ref{eq:model_cov_pca}). This leads to a set of estimates $\{\hat{f}_{ir}\}$ and $\{\hat{\vect{z}}_{ir}\}$. 

The unregularized version of model (\ref{eq:model_cov_pca}) is equivalent to a Singular Value Decomposition applied to each matrix $S_i^{\nicefrac{1}{2}}$ independently, which would lead to a set of orthogonal estimates $\{\hat{\vect{z}}_{ir}\}_r \subset \R^s$, for each $i=1,\ldots,n$. In the regularized model orthogonality is not enforced, however the estimated PC components can be orthogonalized post-estimation by means of a QR decomposition.

Define now the empirical variances to be $\hat{\gamma}_{ir} = \|\hat f_{ir}\|^2_{L^2(\M)}$ and consider the $L^2(\M)$-normalized version of $\{\hat f_{ir}\}$. An approximate representation of $S_i = (S_i^{\nicefrac{1}{2}})^T S_i^{\nicefrac{1}{2}}$ is thus given by
\begin{equation}\label{eq:repr_cov_pca}
S_i =  K_i \sum_r \left\{ \hat{\gamma}_{ir} (\Psi \hat{f}_{ir}) (\Psi \hat{f}_{ir})^T \right\} K_i^T,
\end{equation}
and the associated approximate representation of $C_i$, in terms of $\{\hat{\gamma}_{ir}\}$ and $\{\hat{f}_{ir}\}$, is
\[
C_i = \sum_r \hat{\gamma}_{ir} \hat{f}_{ir} \otimes \hat{f}_{ir},
\]
where $\hat \gamma_{ir}$ is an estimate of the variance $\gamma_{ir}$ and $\hat{f}_{ir}$ is an estimate of $\psi_{ir}$. The tensor product $\hat{f}_{ir} \otimes \hat{f}_{ir}$ is such that $(\hat{f}_{ir} \otimes \hat{f}_{ir})(v,v') = \hat{f}_{ir}(v) \hat{f}_{ir}(v')$ for all $v,v' \in \M$. The regularizing terms in (\ref{eq:model_cov_pca}) introduce spatial coherence on the estimated $\{\hat{f}_{ir}\}$ and thus on the estimated eigenfunctions of $\{C_i\}$, fundamental in an inverse problems setting. 

The reconstructed covariance functions $\{C_i\}$ could be discretized on a dense grid, leading to a collection of covariance matrices $(C_i(v_j, v_l))_{jl}$. Following the approach in \cite{Dryden2009}, a Riemannian metric could be defined on the space of covariance matrices, followed by projection of $(C_i(v_j, v_l))_{jl}$ on the tangent space centered at the sample Fr\'echet mean. PCA could then be carried out on vectorizations of the tangent space representations. A related approach, for covariance functions, has been adopted in \cite{Pigoli2014}. 

However, the aforementioned approaches could be prohibitive in our setting. In fact, performing PCA on tangent space projections produces modes of variation that are geodesics passing through the mean, and whose interpretation in a high-dimensional setting is often challenging. Therefore, in the next section, we propose an alternative model that enables joint reconstruction, and representation on a `common basis', of indirectly observed covariance functions.

\subsection{A population model}\label{sec:PCcov_pop}
Let $\{\vect{\hat{z}}_i\}_{i=1}^n \subset \R^s$ and $\hat{f} \in H^2(\M)$ be given by the following model:
\begin{equation}\label{eq:model_cov_pca_comm}
(\{\hat{\vect{z}}_i\},\hat{f}) = \argmin \limits_{\{\vect{z}_i\} \subset \R^s,f \in H^2(\M)} \sum \limits_{i=1}^n \|S_i^{\nicefrac{1}{2}} - \vect{z}_i (K_i \Psi f)^T \|^2_F + \lambda \sum \limits_{i=1}^n \| \vect{z}_i\|^2 \int_{\mathcal{M}} \! \Delta^2_{\mathcal{M}} f.
\end{equation}
The newly defined model, as opposed to model (\ref{eq:model_cov_pca}), has now a subject-specific $s$-dimensional vector $\vect{z}_i$ and a term $f$ that is common to all samples. As in the previous model, the subsequent components can be estimated by deflation methods, leading to a set of estimates $\hat{f}_{r}$ and $\hat{\vect{z}}_{ir}$. 

Define now the empirical variances to be $\hat{\gamma}_{ir} = \|\vect{\hat{z}}_{ir}\|^2 \|\hat f_{r}\|^2_{L^2(\M)}$ and consider the $L^2(\M)$-normalized version of $\{\hat f_{r}\}$. The empirical term in model (\ref{eq:model_cov_pca_comm}) suggests an approximate representation of $S_i$ that is 
\begin{equation}\label{eq:repr_cov_pca_comm}
C_i = \sum_r \hat{\gamma}_{ir} \hat{f}_{r} \otimes \hat{f}_{r},
\end{equation}
where each underlying covariance function $C_i$ is approximated by the sum of the product between a subject-specific constant $\hat{\gamma}_{ir}$ and a component $\hat{f}_{r} \otimes \hat{f}_{r}$ common to all the observations. The regularizing term in (\ref{eq:model_cov_pca_comm}) introduces spatial coherence on the estimated functions $\{\hat{f}_{r}\}$.

The covariance operators $\{\calC_i\}$ are said to be commuting if $\calC_i\calC_{i'} = \calC_{i'}\calC_i$ for all $i,i' = 1, \ldots, n$. This property can be equivalently characterized as
\begin{equation}\label{eq:cov_comm_basis}
C_i(v,v') = \sum_{r=1}^\infty \gamma_{ir} \psi_{r}(v)\psi_{r}(v'), \qquad \forall v,v' \in \M,
\end{equation}
with $\{\gamma_{ir}\}_r$ subject-specific variances and $\{\psi_{r}\}$ a set of common orthonormal functions. Thus, a collection of commuting covariance operators is such that its covariance operators can be simultaneously diagonalized by a basis $\{ \psi_{r}\}$. In this case, the functions $\{\hat{f}_r\}$ can be regarded as estimates of $\{\psi_{r}\}$ and $\{\hat \gamma_{ir}\}$ estimates of $\{\gamma_{ir}\}$. 

On the one hand, model (\ref{eq:model_cov_pca_comm}) constrains the estimated covariances to be of the form $C_i = \sum_r \hat \gamma_{ir} \hat{f}_{r} \otimes \hat{f}_{r}$ and not of the more general form $C_i = \sum_r \hat \gamma_{ir} \hat{f}_{ir} \otimes \hat{f}_{ir}$. On the other hand, such a model takes advantage of all the $n$ samples to estimate the components $\{\hat{f}_{r} \otimes \hat{f}_{r}\}$. Moreover, the associated variables $\{\hat \gamma_{ir}\}$ give a convenient approximate description of the $i$th covariance, as they are comparable across samples, as opposed to the one computed from model (\ref{eq:model_cov_pca}). In fact, the $i$th covariance function can be represented by the variance vector $(\hat \gamma_{i1},\ldots,\hat \gamma_{iR})^T$, for a suitable truncation level $R$, where each entry is associated with the rank-one component $\hat{f}_{r} \otimes \hat{f}_{r}$. For each $r$, a scatter plot of the variances $\{\gamma_{ir}\}_i$, as the one in Figure~\ref{fig:scores_covfunctions_multi}, helps understand what the average contribution of the $r$th components is and what its variability across samples is. Model (\ref{eq:repr_cov_pca_comm}) could also be interpreted as a common PCA model \citep{Flury1984,Benko2009}, as $\{\hat f_r\}$ are the estimated regularized eigenfunctions of the pooled covariance $C = \frac{1}{n} \sum_{i=1}^{n} C_i$.

Potentially, PCA could be performed on the descriptors $(\hat \gamma_{i1},\ldots,\hat \gamma_{iR})^T$ to find rank-$R$ components that maximize the variance of linear combinations of $\{\hat{\gamma}_{ir}\}$ (i.e. the variance of the variances). However, results would be more difficult to interpret, as they would involve variations that are rank-$R$ covariance functions around the rank-$R$ mean covariance function.

\subsection{Algorithm}
The minimization in (\ref{eq:model_cov_pca}), for each fixed $i$, is a particular case of the one in (\ref{eq:model_f_pca}) (see Section~\ref{sec:Algorithm_PCfunctions}), so we focus on the minimization problem in (\ref{eq:model_cov_pca_comm}) which is also approached in an iterative fashion. We set $\sum_{i=1}^n \| \vect{z}_i\|^2 = 1$ in the estimation procedure. This leads to the estimates of $\{\vect{z}_i\}$, given $f$, that are
\[
\vect{z}_i = \frac{\tilde{\vect{z}}_i}{\sqrt{\sum_{i=1}^n \| \tilde{\vect{z}}_i \|^2}}, \qquad i = 1,\ldots,n,
\]
with
\begin{equation*}
\tilde{\vect{z}}_i = S_i^{\nicefrac{1}{2}} K_i \Psi f_h, \qquad i = 1,\ldots,n.
\end{equation*}

The estimate of $f$ given $\{\vect{z}_i\}$, in the discrete space $V$ introduced in Section~\ref{sec:Algorithm_PCfunctions}, is given by the following proposition.
\begin{proposition}\label{prop:FE_cov}
The Surface Finite Element solution $\hat{f}_h \in V$ of model (\ref{eq:model_cov_pca_comm}), given the vectors $\{\vect{z}_i\}$, is $\hat{f}_h = \vect{\hat{c}}^T \pmb{\phi}$ where $\vect{\hat{c}}$ is the solution of
\begin{equation}
\bigg(\sum_{i=1}^n \|\vect{z}_i \|^2 K_i^TK_i + \lambda A M^{-1} A \bigg) \vect{\hat{c}} =  \sum_{i=1}^n K_i^T (S_i^{\nicefrac{1}{2}})^T\vect{z}_i.
\end{equation}
\end{proposition}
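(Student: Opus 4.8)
The plan is to observe that, for a fixed $\{\vect{z}_i\}$, the population objective in (\ref{eq:model_cov_pca_comm}) is simply an instance of the objective in (\ref{eq:model_f_pca}) after a re-indexing of the data, so that Proposition~\ref{prop:FE_func} applies almost verbatim; the remaining work is the bookkeeping of collapsing the re-indexed sums back into the compact matrix form of the statement. For completeness one can also argue directly, which I sketch at the end.

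First I would expand the Frobenius term row by row of $S_i^{\nicefrac{1}{2}}$. Writing $(\vect{z}_i)_j$ for the $j$th entry of $\vect{z}_i$ and $\vect{s}_{i,j}$ for the $j$th column of $S_i^{\nicefrac{T}{2}}$ (equivalently, the transpose of the $j$th row of $S_i^{\nicefrac{1}{2}}$), one has
\[
\big\| S_i^{\nicefrac{1}{2}} - \vect{z}_i (K_i \Psi f)^T \big\|^2_F \;=\; \sum_{j=1}^{s} \big\| \vect{s}_{i,j} - (\vect{z}_i)_j\, K_i \Psi f \big\|^2 .
\]
Summing over $i$, the data term of (\ref{eq:model_cov_pca_comm}) becomes $\sum_{(i,j)} \| \vect{y}_{(i,j)} - z_{(i,j)} K_{(i,j)} \Psi f\|^2$ under the identification $\vect{y}_{(i,j)} = \vect{s}_{i,j}$, $z_{(i,j)} = (\vect{z}_i)_j$, $K_{(i,j)} = K_i$, i.e.\ exactly the data term of (\ref{eq:model_f_pca}) with the index $l$ ranging over the $ns$ pairs $(i,j)$. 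Moreover the stacked score vector $\vect{z}$ built from all the $z_{(i,j)}$ satisfies $\vect{z}^T\vect{z} = \sum_i \|\vect{z}_i\|^2$, so the penalty $\lambda\,\vect{z}^T\vect{z}\int_{\mathcal{M}} \Delta^2_{\mathcal{M}} f$ of (\ref{eq:model_f_pca}) coincides with $\lambda \sum_i \|\vect{z}_i\|^2 \int_{\mathcal{M}} \Delta^2_{\mathcal{M}} f$, the penalty of (\ref{eq:model_cov_pca_comm}). Since in the estimation procedure $\{\vect{z}_i\}$ is normalized so that $\sum_i \|\vect{z}_i\|^2 = 1$, the stacked $\vect{z}$ has unit norm and Proposition~\ref{prop:FE_func} applies as stated.

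Applying Proposition~\ref{prop:FE_func}, the Surface Finite Element solution is $\hat f_h = \vect{\hat{c}}^T \pmb{\phi}$ with $\vect{\hat{c}} = \big(\sum_{(i,j)} z_{(i,j)}^2 K_{(i,j)}^T K_{(i,j)} + \lambda A M^{-1} A\big)^{-1} \sum_{(i,j)} z_{(i,j)} K_{(i,j)}^T \vect{y}_{(i,j)}$. It then suffices to perform the inner sums over $j$: $\sum_j (\vect{z}_i)_j^2\, K_i^T K_i = \|\vect{z}_i\|^2 K_i^T K_i$ for the system matrix, and $\sum_j (\vect{z}_i)_j\, K_i^T \vect{s}_{i,j} = K_i^T \big(\sum_j \vect{s}_{i,j} (\vect{z}_i)_j\big) = K_i^T S_i^{\nicefrac{T}{2}} \vect{z}_i$ for the right-hand side, using that the columns of $S_i^{\nicefrac{T}{2}}$ are the $\vect{s}_{i,j}$. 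Substituting yields precisely the linear system in the statement. Alternatively, the direct route is to restrict $f$ to $V$ (so $\Psi f = \vect{c}$ by the nodal assumption), expand $\|S_i^{\nicefrac 12} - \vect{z}_i(K_i\vect{c})^T\|_F^2 = \mathrm{const} - 2\,\vect{c}^T K_i^T S_i^{\nicefrac T2}\vect{z}_i + \|\vect{z}_i\|^2\, \vect{c}^T K_i^T K_i \vect{c}$, represent the biharmonic penalty via the mixed finite element formulation of \cite{Azzimonti2014} — introducing the auxiliary unknown that weakly equals $\Delta_{\mathcal{M}} f$, whose coefficient vector is $\vect{g} = -M^{-1}A\vect{c}$, so that $\int_{\mathcal{M}} \Delta^2_{\mathcal{M}} f = \vect{g}^T M \vect{g} = \vect{c}^T A M^{-1} A \vect{c}$ — and set to zero the gradient of the resulting convex quadratic in $\vect{c}$ (using $\sum_i\|\vect{z}_i\|^2 = 1$), obtaining the same normal equations.

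The only genuinely non-routine ingredient is the mixed finite element treatment of $\int_{\mathcal{M}} \Delta^2_{\mathcal{M}} f$: functions in $V$ are merely piecewise affine and not in $H^2(\mathcal{M})$, so this term must be read through the mixed weak formulation, and one has to check that the substitution $\vect{g} = -M^{-1}A\vect{c}$ — hence the operator $A M^{-1} A$ — is exactly what that formulation produces. In the reduction above this obstacle is already absorbed into Proposition~\ref{prop:FE_func}, which we are free to assume, so what remains here is elementary. A minor additional point is well-posedness of the linear system: $A M^{-1} A$ is symmetric positive semidefinite with kernel spanned by the constant vector, so $\sum_i \|\vect{z}_i\|^2 K_i^T K_i + \lambda A M^{-1} A$ is invertible for $\lambda > 0$ under the standard identifiability assumption that the constant function is not simultaneously annihilated by all forward operators $K_i$ with $\vect{z}_i \neq \vect{0}$, in which case the stationary point is the unique minimizer.
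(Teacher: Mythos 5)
Your proposal is correct and follows essentially the same route as the paper: the paper expands the Frobenius term into the quadratic form $(\Psi f)^T(\sum_i \|\vect{z}_i\|^2 K_i^TK_i)\Psi f - 2(\Psi f)^T\sum_i K_i^T S_i^{\nicefrac{T}{2}}\vect{z}_i$ and then invokes the same weak-formulation and mixed finite element steps as in the proof of Proposition~\ref{prop:FE_func} — exactly your ``direct route.'' Your primary argument, re-indexing the rows of the $S_i^{\nicefrac{1}{2}}$ so that Proposition~\ref{prop:FE_func} applies verbatim as a black box, is a slightly cleaner packaging of the same reduction and is equally valid.
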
%

\begin{algorithm}[!htb]
\caption{Inverse Problems - Covariance PCA Algorithm}\label{alg:InverseCovfPCA}
\begin{algorithmic}[1]
\State Square-root decompositions
\begin{enumerate}[label=(\alph*)]
\item Compute the representations $S^{\nicefrac{1}{2}}_1,\ldots,S^{\nicefrac{1}{2}}_n$ from $S_1,\ldots,S_n$ as
\[
S_i^{\nicefrac{1}{2}} = D_i^{\nicefrac{1}{2}} V_i^T,
\]
with $S_i = V_iD_iV_i^T$ the spectral decomposition of $S_i$.
\end{enumerate}

\State Initialization:

\begin{enumerate}[label=(\alph*)]
\item Computation of $M$ and $A$
\item Initialize $\{\vect{z}_i\}_{i=1}^n$, the scores of the first PC
\end{enumerate}

\State PC function's estimation from model (\ref{eq:model_cov_pca}):

Compute $\vect{c}$ such that %
\begin{equation*}	
\bigg(\sum_{i=1}^n \|\vect{z}_i \|^2 K_i^TK_i + \lambda A M^{-1} A \bigg) \vect{{c}} = \sum_{i=1}^n K_i^T (S_i^{\nicefrac{1}{2}})^T\vect{z}_i
\end{equation*}
\begin{equation*}
f_h \la \vect{c}^T \pmb{\phi}
\end{equation*}
\State Scores estimation from model (\ref{eq:model_cov_pca}):
\begin{equation*}
\vect{z}_i \la S_i^{\nicefrac{1}{2}}  K_i \Psi f_h , \qquad i = 1,\ldots,n
\end{equation*}

\begin{equation*}
\vect{z}_i \la \frac{\vect{z}_i}{\sqrt{\sum_{i=1}^n \| \vect{z}_i \|^2}}, \qquad i = 1,\ldots,n
\end{equation*}

\State Repeat Step 3-4 until convergence

\end{algorithmic}
\end{algorithm}

Algorithm~\ref{alg:InverseCovfPCA} contains a summary of the estimation procedure. From a practical point of view, the choice to define the representation basis to be a collection of rank one (i.e. separable) covariance functions, of the type $F_r = \hat{f}_r \otimes \hat{f}_r$, is mainly driven by the following reasons. Firstly, rank-one covariance functions are easier to interpret due to their limited degrees of freedom. Secondly, on a rank one covariance function $F_r = \hat{f}_r \otimes \hat{f}_r$ spatial coherence can be imposed by regularizing $\hat{f}_r$, as in fact done for model (\ref{eq:model_cov_pca}), and this is fundamental in the setting of indirectly observed covariance functions. Finally, due to their size, it might not be possible to store the full reconstructions of the covariance functions $\{C_i\}$ on the brain space, instead, the representation model in (\ref{eq:repr_cov_pca_comm}) allows for an efficient joint representation of such covariance functions in terms of rank-one components.

\section{Simulations}\label{sec:simulations}
In this section, we perform simulations to assess the performances of the proposed algorithms. To reproduce as closely as possible the application setting, the cortical surfaces and the forward operators are taken from the MEG application described in Section~\ref{sec:application}. The details on the extraction and computation of such objects are left to the same section. For the same reason, the signals on the brain space considered here are vector-valued functions, specifically functions from the brain space $\M$ to $\R^3$, as is the case in the MEG application. The proposed methodology can be trivially extended to successfully deal with this case, as shown in the following simulations.

\subsection{Indirectly observed functions}
We consider $\M_\mathcal{T}$ to be a triangular mesh, with 8K nodes, representing the cortical surface geometry of a subject, as shown on the left panel of Figure~\ref{fig:forwardoperator}. Each of the 8K nodes will represent a location $v_j$ associated with the sampling operator $\Psi$. The locations of the nodes $\{v_j\}$ on the brain space, the location of the $241$ detectors on the sensors space and a model of the subject's head, enable the computation of a forward operator $K$ describing the relation between the signal generated on the locations $\{v_j\}$, on the brain space, and the signal detected on the $241$ sensors in the sensors space. In practice, the signal on each node $v_j$ is described by a three dimensional vector, characterized by an intensity and a direction, while the signal detected on the sensors space is a scalar signal. Thus, the forward operator is a $241 \times 24K$ matrix.

% Figure h3 hcp_singlesubject_ISMPCA
\begin{figure}[!htb]
%\vspace{6pc}
\centering
\includegraphics[width=0.75\textwidth]{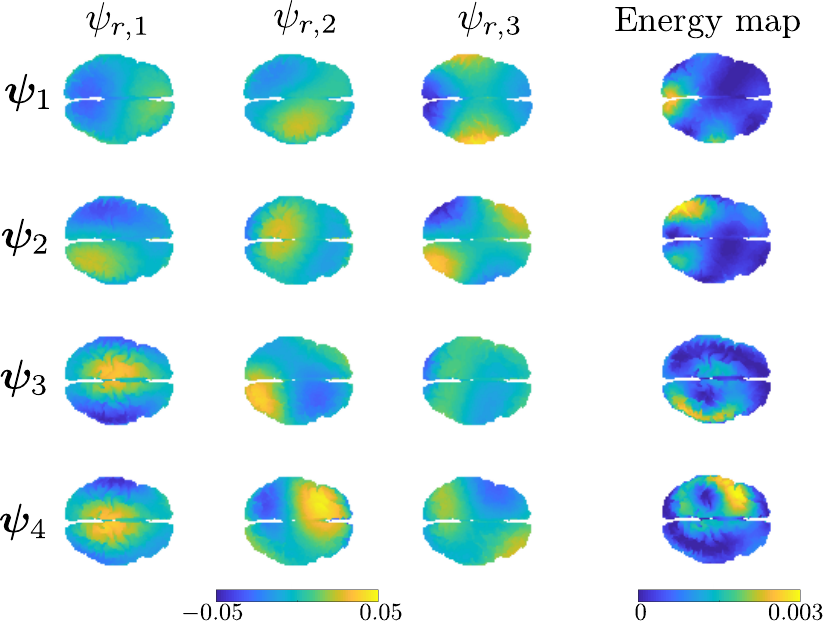}
\caption[]{From top to bottom the components and the energy maps of the PC functions $\bm{\psi}_1, \ldots, \bm{\psi}_4$.}
\label{fig:PCfunctions_sim}
\end{figure}

\begin{figure}[!htb]
%\vspace{6pc}
\centering
\includegraphics[width=0.9\textwidth]{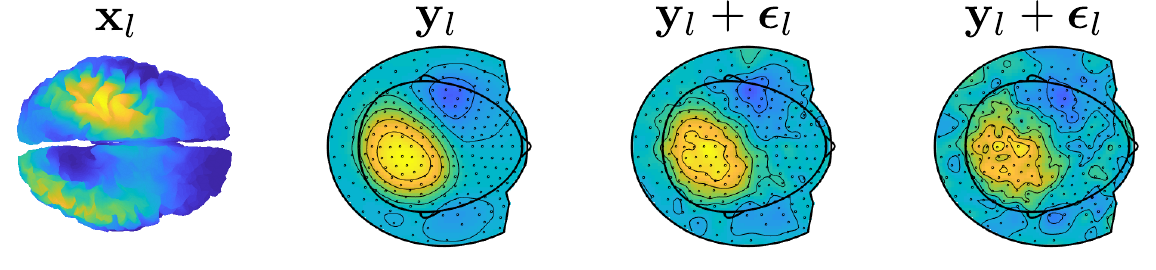}
\caption[]{From left to right, the energy map of a generated function $\vect{x}_l$, the associated signal $\vect{y}_l$ on the sensors space with respectively no additional error, Gaussian error of standard deviation $\sigma = 5$ and Gaussian error of standard deviation $\sigma = 10$.}
\label{fig:signal_sim}
\end{figure}

We first want to assess the performances of the proposed model in the case of indirect functional observations belonging to a linear space. To this purpose, we produce synthetic data following the generative model (\ref{eq:model}).
Specifically, on $\M_\mathcal{T}$, we construct the four $L^2(\M_\calT)$ orthonormal vector-valued functions $\{\bm{\psi}_r = (\psi_{r,1},\psi_{r,2},\psi_{r,3}): r=1,\ldots,4\}$, with $\bm{\psi}_r : \M_\mathcal{T} \ra \R^3$. These represent the PC functions to be estimated. In Figure~\ref{fig:PCfunctions_sim} we show the four components of $\{\bm{\psi}_r\}$ and the associated energy maps $\{\| \bm{\psi}_{r}(v)\|^2: v \in \M_\calT\}$, with $\| \cdot \|$ denoting the Euclidean norm in $\R^3$. We then generate $m = 50$ smooth vector-valued functions $\{\vect{x}_l\}$ on $\M_\mathcal{T}$ by
\[
\vect{x}_l = z_{l1} \bm{\psi}_1 + z_{l2} \bm{\psi}_2 + z_{l3} \bm{\psi}_3 + z_{l4} \bm{\psi}_4 \qquad l=1,\ldots,m,
\]
where $\{z_{lr}\}$ are i.i.d realizations of the four independent random variables $\{z_r \sim N(0, \gamma_r): r=1,\ldots,4\}$, with $\gamma_1 = 3^2$, $\gamma_2 = 2.5^2$, $\gamma_3 = 2^2$ and $\gamma_4 = 1$.

The functions $\{\vect{x}_l\}$ are sampled at the 8K nodes, and the forward operator is applied to the sampled values, producing a collection of vectors  $\{\vect{y}_l\}$ each of dimension $241$, the number of active sensors. Moreover, on each entry of the vectors $\{\vect{y}_l\}$, we add Gaussian noise with mean zero and standard deviation $\sigma$, for different choices of $\sigma$, to reproduce different signal-to-noise ratio regimes.

In the following, we compare the PC model (\ref{eq:model_f_pca}) to an alternative approach that we call the naive approach. In fact, the individual functions $\{\vect{x}_l\}$ could be estimated from $\{\vect{y}_l\}$ by use of classical inverse problem estimators. Here, we adopt the estimates $\{\hat{\vect{x}}_l\}$ defined as
\begin{equation}\label{eq:indiv_rec}
\hat{\vect{x}}_l = \argmin \limits_{\substack{\vect{f} = (f_1,f_2,f_3): \\ f_1,f_2,f_3 \in H^2(\M)}} \sum \limits_{l=1}^m \| \vect{y}_l - K \Psi \vect{f} \|^2 + \lambda \int_{\mathcal{M}} \! \|\Delta_{\mathcal{M}} \vect{f}\|^2, \qquad l=1,\ldots,m,
\end{equation}
where each $\hat{\vect{x}}_l$ is defined in such a way that it balances the fitting term and the regularization term in (\ref{eq:indiv_rec}). Due to the fact that $\vect{f}$ is vector-valued, $\| \Delta_{\mathcal{M}} \vect{f}\|^2$ is defined as
\[
\| \Delta_{\mathcal{M}} \vect{f}\|^2 = \Delta_{\mathcal{M}}^2 f_1 + \Delta_{\mathcal{M}}^2 f_2 + \Delta_{\mathcal{M}}^2 f_3,
\]
with $\{f_q: q=1,2,3\}$ denoting the components of $\vect{f}$. The same penalty operator is also adopted to generalize to vector-valued functions the PC models introduced in Sections \ref{sec:PCfunctions}-\ref{sec:PCcovariances}.  In this approach, the constant $\lambda$ is chosen independently for each of the $m$ functions by partitioning the $241$ detectors in roughly equally sized $K=2$ groups and applying $K$-fold cross-validation. The criterion for the optimal $\lambda$ is the average reconstruction error, on the sensors space, computed on the validation groups. Once we obtain the estimates $\{\hat{\vect{x}}_l\}$ we can compute the estimated PC functions $\{\bm{\psi}_r\}$ by applying classical multivariate PC analysis on the reconstructed objects $\hat{\vect{x}}_l$.

The estimates are compared to those of the proposed PC function model, as described in Algorithm~\ref{alg:InversefPCA}, with 15 iterations. Note that, instead, a tolerance could be fixed to test if the algorithm has converged. However, 15 iterations give satisfactory convergence levels in our simulations and application studies. We partition the $m$ observations in equally sized $K = 2$ groups and perform $K$-fold cross-validation for the choice of the penalty. Specifically, we choose the coefficient $\lambda$ that minimizes the sensors space reconstruction error, on the validation groups.

To evaluate the performances of the two approaches, we generate 100 datasets as previously detailed. The quality of the estimated $r$th PC function is then measured with $E(\bm{\psi}_r,\hat{\bm{\psi}}_r) = \sum_{q=1}^3 \|\nabla_{\mathcal{M}}(\psi_{r,q} - \hat{\psi}_{r,q})\|^2$. The results are summarized in the boxplots in Figure~\ref{fig:PCfunctions_sim_boxplots}, for two different signal-to-noise ratios, where the Gaussian noise has standard deviation $\sigma = 5$ and $\sigma = 10$. In Figure~\ref{fig:signal_sim} we show an example of a signal on the brain space corrupted with the specified noise levels.

The boxplots highlight the fact that the proposed approach provides better estimates of the PC functions (i.e. lower estimation errors $E(\bm{\psi}_r,\hat{\bm{\psi}}_r)$), when compared to the naive approach. Differences in the estimation error are higher in a low signal-to-noise regime, as it is for the estimation of the fourth PC function, where intuitively, the low variance associated to the PC function makes it more difficult to distinguish this structured signal from the noise component. Also surprising is the stability of the estimates of the proposed algorithm across the generated datasets, as opposed to the naive approach of reconstructing the functional observations independently, which instead returns multiple particularly unsatisfactory reconstructions. An example of such reconstructions is shown in Figure~\ref{fig:PCfunctions_sim_figure}.

% Figure h6 hcp_singlesubject_ISMPCA
\begin{figure}[!htb]
%Figures from script script_hcp_singlesubject_ISMfPCA_test.m
\centering
\includegraphics[width=1\textwidth]{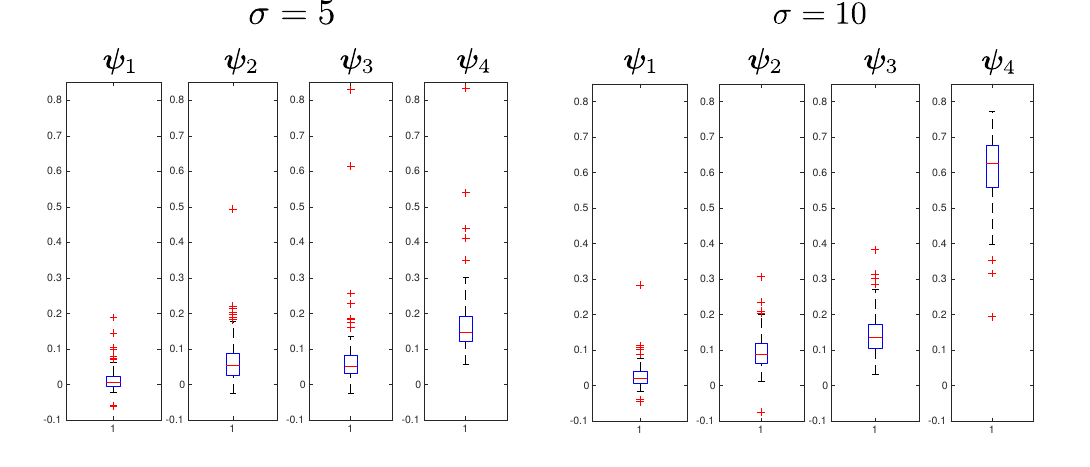}%
\caption[]{
On the left, a summary of the results in a medium signal-to-noise ratio regime. On the right, a summary of the results in a low signal-to-noise ratio regime. Each boxplot displays the paired differences of the estimation errors $E(\bm{\psi}_r,\hat{\bm{\psi}}_r)$ between the estimates of the two steps naive method and those obtained by applying Algorithm~\ref{alg:InversefPCA}. A paired difference greater than $0$ indicates that, for the dataset in question, Algorithm~\ref{alg:InversefPCA} has performed better than the two steps naive approach.}
\label{fig:PCfunctions_sim_boxplots}
\end{figure}

% Figure h6 hcp_singlesubject_ISMPCA
\begin{figure}[!htb]
%Figures from script script_hcp_singlesubject_ISMfPCA_test.m
\centering
\includegraphics[width=0.65\textwidth]{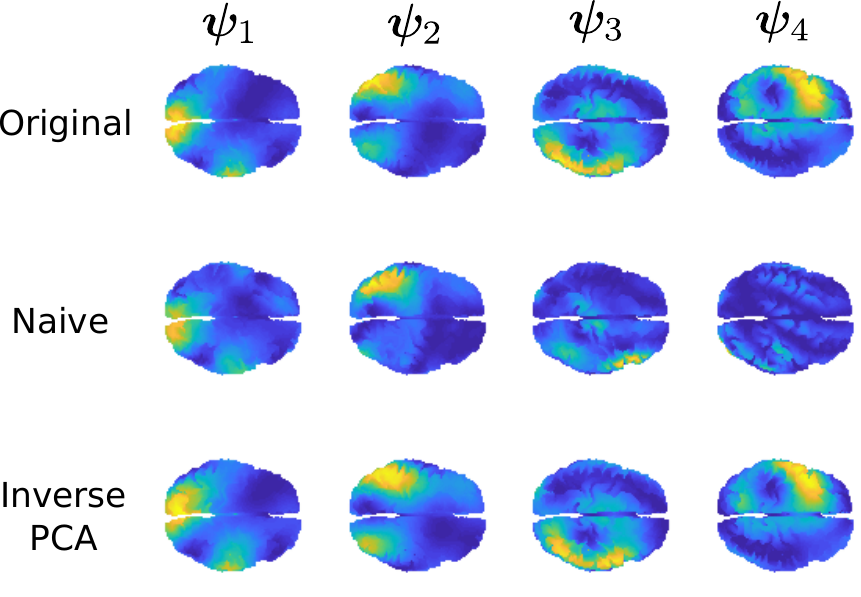}%
\caption[]{On the first row the energy maps of the true four PC components to be estimated, on the second row the estimations given by the two steps naive method, and on the third row the reconstructions obtained by applying Algorithm~\ref{alg:InversefPCA}.}
\label{fig:PCfunctions_sim_figure}
\end{figure}

\subsection{Indirectly observed covariance functions}
In this section, we consider $\M_\mathcal{T}$ to be a 8K nodes triangular mesh, this time representing a template geometry of the cortical surface, which is shown in Figure~\ref{fig:Conte69_8K}. This contains only the geometric features common to all subjects. Moreover, each subject's cortical surface is also represented by a 8K nodes triangular surface, which is used, together with the locations of the $241$ detectors on the sensors space, and the head model, to compute a forward operator $K_i$ for the $i$th subject. The 8K nodes of each subject's triangular mesh are in correspondence with the 8K nodes of the template mesh $\M_\mathcal{T}$. This allows the model to be defined on the template $\M_\mathcal{T}$.

As in the previous section, we construct four $L^2(\M_\mathcal{T})$ orthonormal functions $\{\bm{\psi}_r = (\psi_{r,1},\psi_{r,2},\psi_{r,3}): r=1,\ldots,4\}$. The energy maps of $\{\bm{\psi}_r\}$ are shown in Figure~\ref{fig:PCcovfunctions_est_sim}. We generate synthetic data from model (\ref{eq:cov_brain_sensors}) as follows:
\[
C_i = \sum_{r=1}^4 z_{ir}^2 \bm{\psi}_r\otimes \bm{\psi}_r
= \sum_{r=1}^4 z_{ir}^2
\begin{bmatrix}
	\psi_{r,1}\otimes\psi_{r,1} & \psi_{r,1}\otimes\psi_{r,2} & \psi_{r,1}\otimes\psi_{r,3}\\
	\psi_{r,2}\otimes\psi_{r,1} & \psi_{r,2}\otimes\psi_{r,2} & \psi_{r,2}\otimes\psi_{r,3}\\
	\psi_{r,3}\otimes\psi_{r,1} & \psi_{r,3}\otimes\psi_{r,2} & \psi_{r,3}\otimes\psi_{r,3}
\end{bmatrix},
\]
where $z_{i1}$, \ldots, $z_{i4}$ are i.i.d realizations of the four independent random variables $\{z_r \sim N(0, \gamma_r): r=1,\ldots,4\}$, with $\gamma_1 = 3^2$, $\gamma_2 = 2.5^2$,  $\gamma_3 = 2^2$ and $\gamma_4 = 1$. The matrix-valued form of the covariance functions arises from the fact that the observed functions on the brain space are vector-valued. Subsequently, we construct the point-wise evaluations matrices $\mathbb{C}_i \in \R^{24K \times 24K}$, from which the correspondent covariance matrices on the sensors space are defined as
\[
S_i = K_i \mathbb{C}_i K_i^T + E_i^T E_i, \qquad i=1,\ldots,n.
\]
\begin{figure}[!htb]
%Figure from script script_hcp_multisubject_ISMCovfPCA_test.m
\centering
\includegraphics[width=0.75\textwidth]{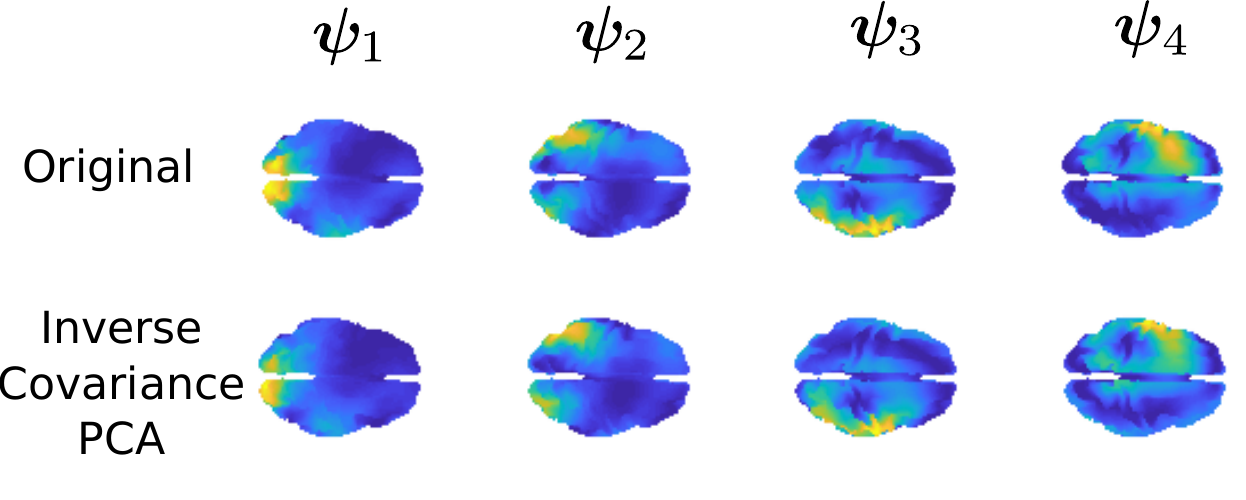}%
\caption[]{On the top row, the energy maps of $\bm{\psi}_1, \ldots, \bm{\psi}_4$. On the bottom row the energy maps of the estimates $\hat{\bm \psi}_1, \ldots, \hat{\bm \psi}_4$ obtained by applying Algorithm~\ref{alg:InverseCovfPCA}.}
\label{fig:PCcovfunctions_est_sim}
\end{figure}
The term $E_i^T E_i$ is an error term, where $E_i$ is a $s \times s$ matrix with each entry that is an independent sample from a Gaussian distribution with mean zero and standard deviation $5$. We then apply Algorithm~\ref{alg:InverseCovfPCA} with 15 iterations, feeding in input $\{S_i\}$. The results are shown in Figure~\ref{fig:PCcovfunctions_est_sim}, in terms of energy maps of the reconstructed functions $\{\hat{\bm{\psi}}_r\}$. These are a close approximation of the underlying functions $\{\bm{\psi}_r\}$. The fidelity measure $\sum_{q=1}^3 \|\nabla_{\mathcal{M}}(\psi_{r,q} - \hat{\psi}_{r,q})\|^2$ of such estimates is $6.8 \times 10^{-2}$, $6.1 \times 10^{-1}$, $6.8 \times 10^{-1}$ and $7.4 \times 10^{-1}$, for $\bm{\psi}_1,\ldots,\bm{\psi}_4$ respectively, which is comparable in term of order of magnitude to the results obtained in the case of PCs of indirectly observed functions. Across the generation of multiple datasets, results are stable, with the exception of few situations where the cross-validation approach suggests a penalization coefficient $\lambda$ that under-smoothes the solution, due to very similar associated signals on the sensors space of the under-smoothed solution and the real solution. However, the cross-validation is only a possible approach to the choice of the penalization constant, and many other options have been proposed in the inverse problems literature, \citep[see, e.g.,][]{Vogel2002}. Some of these, however, involve visual inspection.

\section{Application}\label{sec:application}
In this section, we apply the developed models to the publicly available Human Connectome Project (HCP) Young Adult dataset \citep{Essen2012}. This dataset comprises multi-modal neuroimaging data such as structural scans, resting-state and task-based functional MRI scans, and resting-state and task-based MEG scans from a large number of healthy volunteers. In the following, we briefly review the pre-processing pipeline, applied to such data by the HCP, to ultimately facilitate their use.
\subsection{Pre-processing}
For each individual a high-resolution 3D structural MRI scan has been acquired. This returns a 3D image  describing the structure of the gray and white matter in the brain. Gray matter is the source of large parts of our neuronal activity. White matter is made of axons connecting the different parts of the gray matter. If we exclude the sub-cortical structures, gray matter is mostly distributed at the outer surface of the cerebral hemispheres. This is also known as the cerebral cortex.

\begin{figure}[!htb]
%\vspace{6pc}
\centering
\includegraphics[width=0.55\textwidth]{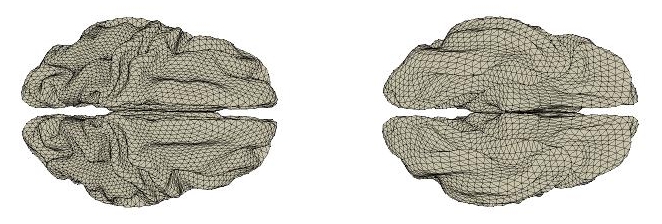}%
\caption[]{Top side and bottom side views of the template triangular mesh $\M_\mathcal{T}$ composed of 8K nodes.}
\label{fig:Conte69_8K}
\end{figure}

By segmentation of the 3D structural MRI, it is possible to separate gray matter from white matter, in order to extract the cerebral cortex structure. Subsequently a mid-thickness surface, interpolating the mid-points of the cerebral cortex, can be estimated, resulting in a 2D surface embedded in a 3D space that represents the geometry of the cerebral cortex. In practice, such a surface, sometimes referred to as cortical surface, is a triangulated surface. Moreover, from the 3D structural MRI, a surface describing the individuals' head can be extracted. The latter plays a role in the derivation of the model for the electrical/magnetic propagation of the signal from the cerebral cortex to the sensors. An example of the cortical surface of a single subject, is shown on the right panel in Figure~\ref{fig:forwardoperator}, instead the associated head surface and MEG sensors positions are shown on the left panel of the same figure.

Moreover, a surface based registration algorithm has been applied to register each of the extracted cortical surfaces to a triangulated template cortical surface, which is shown in Figure~\ref{fig:Conte69_8K}. Post registration, the triangulated template cortical surface is sub-sampled to a 8K nodes surface. Moreover, the nodes on the cortical surface of each subject are also sub-sampled to a set of 8K nodes in correspondence to the 8K nodes of the template. For each subject, a $248 \times 24K$ matrix, representing the forward operator, has been computed with FieldTrip \citep{Oostenveld2011} from its head surface, cortical surface and sensors position. Such a matrix relates the vector-valued signals in $\R^3$, on the nodes of the triangulation of the cerebral cortex, to the one detected from the sensors, consisting of 248 magnetometer channels.

With the aim of studying the functional connectivity of the brain, for each subject, three 6 minutes resting state MEG scans have been performed, of which one session is used in our analysis. During the 6 minutes, data are collected from the sensors at 600K uniformly distributed time-points.  Using FieldTrip, classical pre-processing is applied to the detected signals, such as low quality channels and low quality segments removal. Details of this procedure can be found in the HCP MEG Reference Manual. Moreover, we apply a band pass filter, limiting the spectrum of the signal to the $[12.5, 29]$Hz, also known as the beta waves. For the signal of each channel we compute its amplitude envelope (see Figure~\ref{fig:hilb_env}) which describes the evolution of the signal amplitude. The measure of connectivity between channels that we adopt in this work is the covariance of the amplitude envelopes. Other connectivity metrics, such as phase-based metrics, have been proposed in the literature \citep[see, e.g. ][and references therein]{Colclough2016}.

\begin{figure}[!htb]
%Figure from script script_hcp_multisubject_ISMCovfPCA.m
\centering
\includegraphics[width=0.8\textwidth]{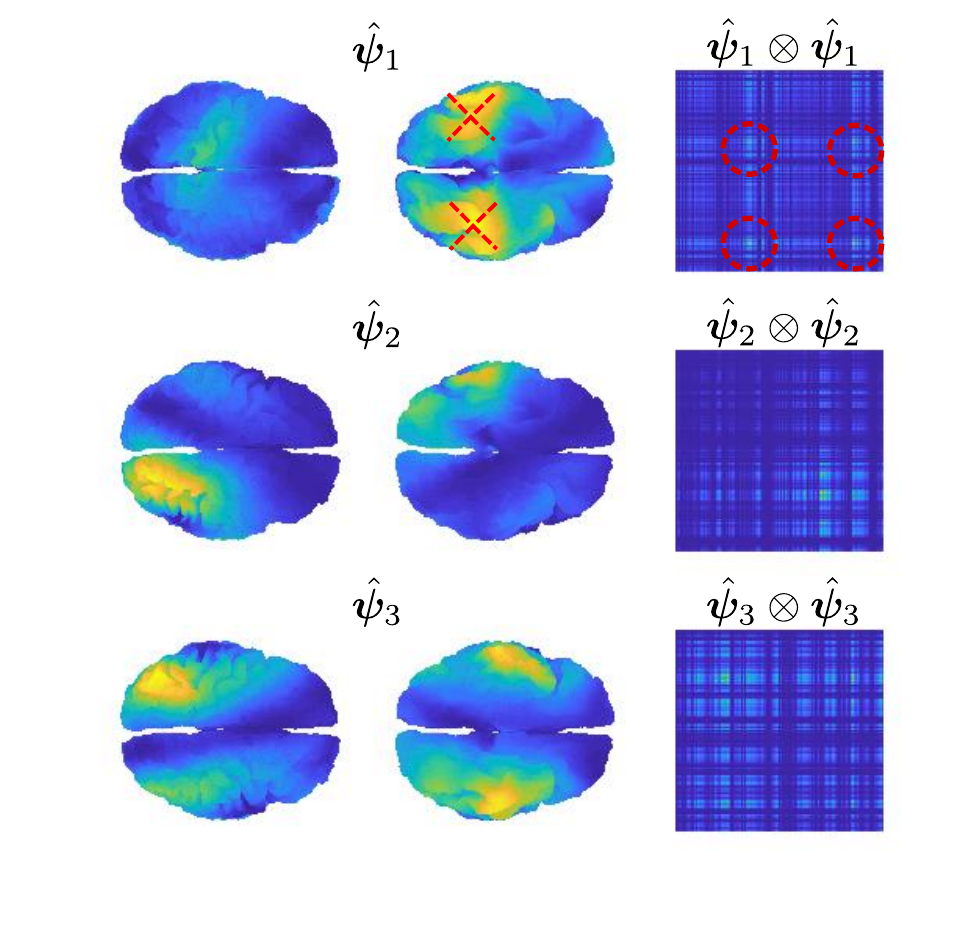}%
\caption[]{Top side and bottom side views of the estimated energy maps $\hat{\bm{\psi}}_1$, $\hat{\bm{\psi}}_2$ and $\hat{\bm{\psi}}_3$ obtained by applying Algorithm~\ref{alg:InverseCovfPCA} to the covariance matrices computed from the MEG resting state data of a single subject on $n=40$ consecutive time intervals. On the right panel, the covariance functions associated with these energy maps. On the top right panel we highlight with red circles the areas with high average interconnectivity, which correspond to the neighborhoods of the red crossed vertices in the plot of the energy map of $\bm \psi_1$.}
\label{fig:PCcovfunctions_dyn}
\end{figure}

\begin{figure}[!htb]
%Figure from script script_hcp_multisubject_ISMCovfPCA.m
\centering
\includegraphics[width=0.8\textwidth]{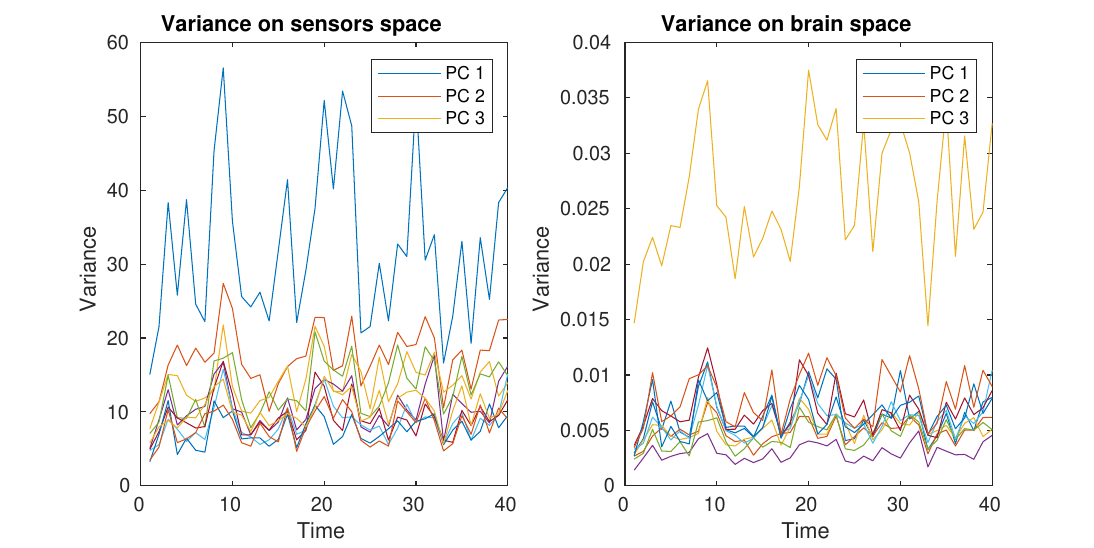}%
\caption[]{Plots of the segment-specific variances of the first $R=10$ PC covariance functions. On the left, the estimated variances on the sensors space, on the right, the estimated variances on the brain space.}
\label{fig:scores_covfunctions_dyn}
\end{figure}

\subsection{Analysis}
Here we apply the population model introduced in Section~\ref{sec:PCcov_pop} to the HCP MEG data. The first part of the analysis focuses on studying dynamic functional connectivity of a specific subject. For this purpose, we subdivide the 6 minutes session in $n=40$ consecutive intervals. Each of these segments is used to compute a covariance matrix in the sensors space, resulting in $n$ covariance matrices $S_1, \ldots, S_{n}$. In this setting, we have one forward operator $K = K_1 = \ldots = K_n$. The aim is understanding the main modes of variation of the functional connectivity on the brain space of the subject. Thus, Algorithm~\ref{alg:InverseCovfPCA}, with $20$ iterations, is applied to $S_1, \ldots, S_{n}$ to find the PC covariance functions. 

A regularization parameter $\lambda$ common to all the PC components is chosen by inspecting the plot of the regularity of the first $R=10$ PC covariance functions ($\sum_{r=1}^{R} \int_\M \|\nabla \bm \psi_r\|^2$) versus the residual norm, for different choices of the parameter. This is a version of the L-curve plot \citep{Hansen2000} and is shown on the left panel of Figure~\ref{fig:lcurves}. Here we show the results for $\lambda = 10^2$, in the appendices we show the results for $\lambda = 10$. The energy maps of the estimated $\hat{\bm{\psi}}_1$, $\hat{\bm{\psi}}_2$ and $\hat{\bm{\psi}}_3$ resulting from the analysis are shown in Figure~\ref{fig:PCcovfunctions_dyn}. These are associated with the first three PC covariance functions $\hat{\bm{\psi}}_1 \otimes \hat{\bm{\psi}}_1$, $\hat{\bm{\psi}}_2 \otimes \hat{\bm{\psi}}_2$ and $\hat{\bm{\psi}}_3 \otimes \hat{\bm{\psi}}_3$. High intensity areas, in yellow, indicate which areas present high average interconnectivity, either by means of positive or negative correlation in time. 

In Figure~\ref{fig:scores_covfunctions_dyn}, we show the plot of variances associated with each time segment, describing the variation in time of the PC covariance functions, hence the variation in interconnectivity. The variance can be either defined on the sensors space, by normalizing the PC covariance functions $\{K \hat{\bm{\psi}}_r\}$, with $K$ the forward operator, or on the brain space, by normalizing the PC covariance functions on the brain space $\{\hat{\bm{\psi}}_r\}$. Due to the presence of invisible dipoles, which are dipoles that display zero magnetic field on the sensors space, the two norms can be quite different, leading to different average variances for each PC covariance function. Due to the high sensitivity of the source space variances on the choice of the regularization parameter, we focus on the estimated variances on the sensors space.

We have also applied our model to the covariances obtained by subdividing the MEG session in $n=80$ segments. As expected the PC covariance functions, shown in Figure~\ref{fig:PCcovfunctions_dyn_n80} are very similar. However, the variances, in Figure~\ref{fig:scores_covfunctions_dyn_loglambda1}, show higher variability in time, which can be partially explained by the fact that shorter time segments lead to covariance estimates that have higher variability. 

\begin{figure}[!htb]
%Figure from script script_hcp_multisubject_ISMCovfPCA.m
\centering
\includegraphics[width=0.8\textwidth]{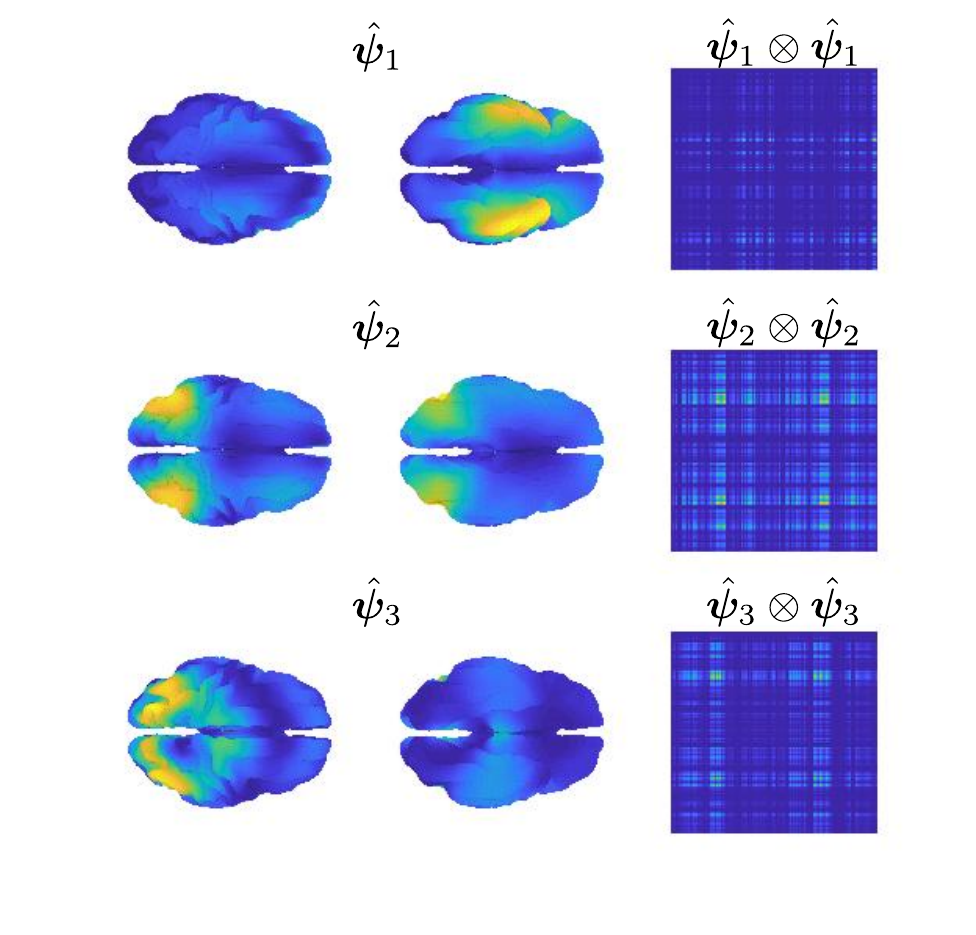}%
\caption[]{Top side and bottom side views of the estimated energy maps $\hat{\bm{\psi}}_1$, $\hat{\bm{\psi}}_2$ and $\hat{\bm{\psi}}_3$ obtained by applying Algorithm~\ref{alg:InverseCovfPCA} to the covariance matrices computed from the MEG resting state data of $n=40$ different subjects. On the right panel, the covariance functions associated with these energy maps.}
\label{fig:PCcovfunctions_multi}
\end{figure}

\begin{figure}[!htb]
%Figure from script script_hcp_multisubject_ISMCovfPCA.m
\centering
\includegraphics[width=0.8\textwidth]{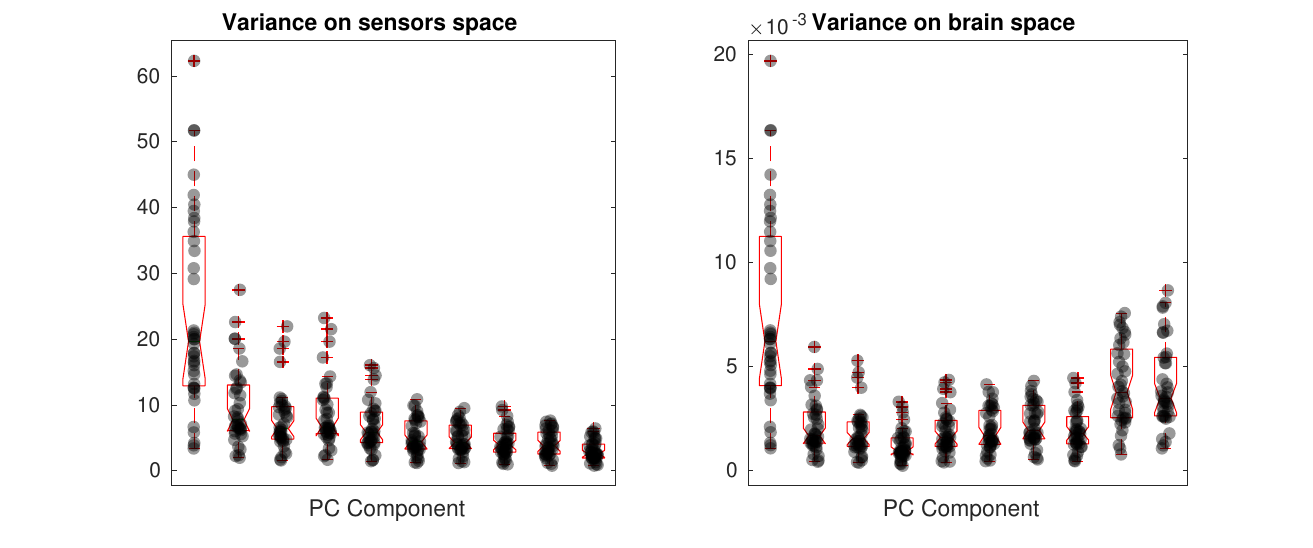}%
\caption[]{Plots of the subject-specific variances associated with the first $R = 10$ PC covariance functions. On the left, the estimated variances on the sensors space, on the right, the estimated variances on the brain space.}
\label{fig:scores_covfunctions_multi}
\end{figure}

The second part of the analysis focuses on applying the proposed methodology to a multi-subject setting. Specifically, $n=40$ different subjects are considered. For each subject, the 6 minutes scan is used to compute a covariance matrix, resulting in $n$ covariance matrices $S_1, \ldots, S_{n}$. The template geometry in Figure~\ref{fig:Conte69_8K} is used as a model of the brain space. Algorithm~\ref{alg:InverseCovfPCA} is then applied to find the PC covariance functions on the template brain, associated with $S_1, \ldots, S_{n}$. We run the algorithm for $20$ iterations, and choose the regularizing parameter to be $\lambda=10^2$ by inspecting the L-curve plot in the right panel of Figure~\ref{fig:lcurves}. The results for $\lambda=10$ are shown in the appendices. The energy maps of the estimated functions $\hat{\bm{\psi}}_1$, $\hat{\bm{\psi}}_2$ and $\hat{\bm{\psi}}_3$ and the associated first three covariance functions $\hat{\bm{\psi}}_1 \otimes \hat{\bm{\psi}}_1$, $\hat{\bm{\psi}}_2 \otimes \hat{\bm{\psi}}_2$ and $\hat{\bm{\psi}}_3 \otimes \hat{\bm{\psi}}_3$, are shown in Figure~\ref{fig:PCcovfunctions_multi}. High intensity areas, in yellow, indicate which areas present high average connectivity. In Figure~\ref{fig:scores_covfunctions_multi}, we show the subject-specific associated variances, both in the sensors space and the brain space.

The presented methodology opens up the possibility to understand population level variation in functional connectivity, and indeed, whether, just as we need different forward operators for individuals (due to anatomical differences), we should also be considering both population and subject-specific connectivity maps when analyzing connectivity networks. In fact, it is of interest to note that in both the single and multi-subject settings, the areas with high interconnectivity, displayed in yellow in Figure~\ref{fig:PCcovfunctions_dyn} and \ref{fig:PCcovfunctions_multi}, seem to be at least partially overlapping with the brain's default network \citep{Buckner2008, Yeo2011}. The brain's default network consists of the brain regions known to have highly correlated hemodynamic activity (i.e.  highest functional connectivity levels), and to be most active, when the subject is not performing any specific task. An image of the spatial configuration of the default network can be found, for instance, in Figure 2 of \cite{Buckner2008}. From the plots of the associated variances in the sensors space (left panel of Figure~\ref{fig:scores_covfunctions_dyn} and Figure~\ref{fig:scores_covfunctions_multi}) we can see that these areas are also the ones that show high variability in connectivity across time or across subjects. This might suggest that the brain's default network is also the brain region that shows among the highest levels of spontaneous variability in connectivity.

The plots of the variances on the brain space (right panel of Figure~\ref{fig:scores_covfunctions_multi}), when compared to those on the sensors space (left panel of Figure~\ref{fig:scores_covfunctions_multi}), demonstrate that these type of studies are highly sensitive to the choice of the regularization, not only in terms of spatial configuration of the results, but also in terms of estimated variances on the brain space. With a naive `first reconstruct and then analyze' approach, where the reconstructed data on the brain space replace those observed on the sensors space, this issue could go unnoticed, as the variability that does not fit the chosen model is implicitly discarded in the reconstruction step and does not appear in the subsequent analysis. Also, importantly, our analysis deals with statistical samples that are entire covariances, overcoming the limitations of seed-based approaches, where prior spatial information is required to choose the seed. Seed locations are usually informed by fMRI studies and this comes with the risk of biasing the analysis when comparing electrophysiological networks (MEG) and hemodynamic networks (fMRI).

In general, care should be taken when drawing conclusions from MEG studies. Establishing static and dynamic functional connectivity from MEG data remains challenging, due to the strong ill-posedness of the inverse problem. It is known that other variables, such as the choice of the frequency band or the choice of the connectivity metric can influence the analysis. While the choice of the neural oscillatory frequency band could be seen as an additional parameter in MEG functional connectivity studies, there is no general agreement on the choice of the connectivity metrics \citep{Gross2013}. It is important to highlight that in this paper we focus on methodological contributions to the specific problem of reconstructing and representing indirectly observed functional images and covariance functions. 

\section{Discussion}\label{sec:discussion}
In this work we introduce a general framework for the reconstruction and representation of covariance operators in an inverse problem context. We first introduce a model for indirectly observed functional images in an unconstrained space, which outperforms the naive approach of solving the inverse problem individually for each sample. This model plays an important role in the case of samples that are indirectly observed covariance functions, and thus constrained to be positive semidefinite. We deal with the non-linearity introduced by such constraint by working with unconstrained representations, yet incorporating spatial information in their estimation. The proposed methodology is finally applied to the study of brain connectivity from the signals arising from MEG scans.

The models proposed here can be extended in many interesting directions. From an applied prospective, it is of interest to apply them to different settings, not necessarily involving neuroimaging, where studying second order information has been so far prohibitive. Direct examples are second order analysis of the dynamics of meteorological observations, such as temperature. Another possible application is the study of the dynamics of ocean currents, where the irregularity of the spatial domain, and its complex boundaries, can be easily accounted for thanks to the manifold representation approach in our models.

From a modeling point of view, it is of interest to take a step further towards the integration of the inverse problems literature with the approach we adopt in this paper. For instance, penalization terms that have been shown to be successful in the inverse problems literature, e.g. total variation penalization, could be introduced in our models.

%\clearpage

\appendix
\appendixpage
\renewcommand\thefigure{\thesection.\arabic{figure}}

\section{Discrete solutions}
\begin{proof}[Proof of Proposition \ref{prop:FE_func}]
We want to find a minimizer $\hat{f} \in H^2(\M)$, given $\vect{z}$ with $\|\vect{z}\|=1$, of the objective function in (\ref{eq:model_f_pca}):
\begin{align}
&\sum \limits_{l=1}^m \| \vect{y}_l - z_l K_l \Psi f \|^2 + \lambda \vect{z}^T\vect{z} \int_{\mathcal{M}} \! \Delta^2_{\mathcal{M}} f \nonumber \\
&\propto (\Psi f)^T (\sum \limits_{l=1}^m z_l^2  K_l^T K_l) \Psi f -2 (\Psi f)^T (\sum \limits_{l=1}^m z_l K_l^T \vect{y}_l )  + \lambda \int_{\mathcal{M}} \! \Delta^2_{\mathcal{M}} f. \label{eq:objective_function_func}
\end{align}
An equivalent formulation of a minimizer $\hat{f} \in H^2(\M)$ of such objective function is given by satisfying the equation
\begin{equation}\label{eq:lagrange}
(\Psi \varphi)^T (\sum \limits_{l=1}^m z_l^2  K_l^T K_l) \Psi \hat{f} + \lambda \int_{\mathcal{M}} \! \Delta_{\mathcal{M}} \varphi \Delta_{\mathcal{M}} \hat{f} = (\Psi \varphi)^T (\sum \limits_{l=1}^m z_l K_l^T \vect{y}_l )
\end{equation}
for every $\varphi \in H^2(\M)$ \citep[see][Chapter 2]{Braess2007}. Moreover, such minimizer is unique if $A(\varphi, f) = (\Psi \varphi)^T (\sum \limits_{l=1}^m z_l^2  K_l^T K_l) \Psi f + \lambda \int_{\mathcal{M}} \! \Delta_{\mathcal{M}} \varphi \Delta_{\mathcal{M}} f$ is positive definite. Given that for a closed manifold $\M$, $\int_{\mathcal{M}} \! \Delta_{\mathcal{M}}^2 f = 0$ iff $f$ is a constant function \citep{Dziuk2013}, the positive definiteness condition is equivalent to assuming that $\ker(\sum \limits_{l=1}^m z_l^2  K_l^T K_l)$, the kernel of $\sum \limits_{l=1}^m z_l^2  K_l^T K_l$, does not contain the subspace of $p$-dimensional constant vectors.

Moreover, we can reformulate equation (\ref{eq:lagrange}) in a form that involves only first-order derivatives by integration by parts against a test function. We then look for a solution in the discrete space $V \subset H^1(\M)$, i.e. finding $\hat{f},g \in V$
\begin{align}\label{eq:lagrange_aux_discrete}
\begin{cases}
&(\Psi \varphi)^T (\sum \limits_{l=1}^m z_l^2  K_l^T K_l) \Psi \hat{f} + \lambda \int_{\mathcal{M}} \! \nabla_{\mathcal{M}} \varphi \cdot \nabla_{\mathcal{M}} g = (\Psi \varphi)^T (\sum \limits_{l=1}^m z_l K_l^T \vect{y}_l ) \\
&\int_{\mathcal{M}} \! \nabla_{\mathcal{M}} \hat{f} \cdot \nabla_{\mathcal{M}} w  - \int_{\mathcal{M}} \! g w = 0
\end{cases}
\end{align}
for all $\varphi,w \in V$. The operator $\nabla_{\mathcal{M}}$ is the gradient operator on the manifold $\M$. The gradient operator $\nabla_{\mathcal{M}}$ is such that $(\nabla_{\mathcal{M}} w)(v)$, for $w$ a smooth real function on $\M$ and $v \in \M$, takes value on the tangent space at $v$. We denote with $\cdot$ the scalar product on the tangent space.

We recall here the definition of the $\kappa \times \kappa$ matrices to be $(M)_{jj'} =\int_{\mathcal{M}_{\mathcal{T}}} \phi_j \phi_{j'}$ and $(A)_{jj'}=\int_{\mathcal{M}_\mathcal{T}} \nabla_{\mathcal{M}_\mathcal{T}} \phi_j \cdot \nabla_{\mathcal{M}_\mathcal{T}} \phi_{j'}$. Note that requiring (\ref{eq:lagrange_aux_discrete}) to hold for all $\varphi,w \in V$ is equivalent to requiring that (\ref{eq:lagrange_aux_discrete}) holds for all $\varphi,w$ that are basis elements of $V$, thus exploiting the basis expansion formula (\ref{eq:basis}) we can characterize (\ref{eq:lagrange_aux_discrete}) with the solution of the linear system
\begin{equation}\label{eq:linear_system}
	\begin{bmatrix}
		\sum \limits_{l=1}^m z_l^2  K_l^T K_l& \lambda A\\
		A& - M
	\end{bmatrix}
	\begin{bmatrix}
		\vect{\hat{c}}\\
		\vect{\hat{q}}
	\end{bmatrix}
=
	\begin{bmatrix}
		\sum \limits_{l=1}^m z_l K_l^T \vect{y}_l\\
		\vect{0}
	\end{bmatrix},
\end{equation}
where $\vect{\hat{c}}$ and $\vect{\hat{q}}$ are the basis coefficients of $f \in V$ and $g \in V$, respectively.
Solving (\ref{eq:linear_system}) in $\vect{\hat{c}}$ leads to
\begin{equation}
(\sum \limits_{l=1}^m z_l^2  K_l^T K_l + \lambda A M^{-1} A) \vect{\hat{c}} = \sum \limits_{l=1}^m z_l K_l^T \vect{y}_l.
\end{equation}

\end{proof}

\begin{proof}[Proof of Proposition \ref{prop:FE_cov}]
We want to find a minimizer $\hat{f} \in H^2(\M)$, given $\{\vect{z}_i\}$ with $\sum_{i=1}^n \| \vect{z}_i\|^2 = 1$, of the objective function in (\ref{eq:model_cov_pca}):
\begin{align}
&\sum \limits_{i=1}^n \|S_i^{\nicefrac{1}{2}} - \vect{z}_i (K_i \Psi f)^T \|^2 + \lambda \sum \limits_{i=1}^n \| \vect{z}_i\|^2 \int_{\mathcal{M}} \! \Delta^2_{\mathcal{M}} f \nonumber \\
& \propto   (\Psi f)^T (\sum \limits_{i=1}^n \|\vect{z}_i\|^2  K_i^T K_i) \Psi f -2  (\Psi f)^T \sum \limits_{i=1}^n K_i^T S_i^{\nicefrac{T}{2}} \vect{z}_i. \label{eq:objective_function_cov}
\end{align}

Comparing (\ref{eq:objective_function_cov}) with (\ref{eq:objective_function_func}) it is evident that by following the same steps of the proof of Proposition \ref{prop:FE_func} we obtain the desired result, which is
\[
\vect{\hat{c}} = \bigg(\sum_{i=1}^n \|\vect{z}_i \|^2 K_i^TK_i + \lambda A M^{-1} A \bigg)^{-1} \sum_{i=1}^n K_i^T S_i^{\nicefrac{T}{2}} \vect{z}_i.
\]

\end{proof}

\section{Application - additional material}
\setcounter{figure}{0}

Here we present further material complementing the analysis in Section~\ref{sec:application}. In Figure~\ref{fig:hilb_env} we show the amplitude envelope computed from a filtered version of a signal detected by an MEG sensor. The covariance of the amplitude envelopes across different sensors is the measure of connectivity used in this work.

In Figure~\ref{fig:lcurves} we show the L-curve plots associated with the PC covariance models applied to the dynamic and multi-subject functional connectivity studies.

\begin{figure}[!htb]
%Figure from script script_hcp_multisubject_ISMCovfPCA.m
\centering
\includegraphics[width=0.55\textwidth]{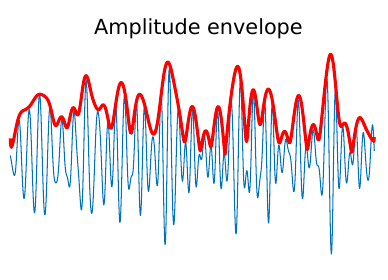}%
\caption[]{Amplitude envelope (in red) of the filtered signal (in blue) detected by an MEG sensor.}
\label{fig:hilb_env}
\end{figure}

\begin{figure}[!htb]
%Figure from script script_hcp_multisubject_ISMCovfPCA.m
\centering
\includegraphics[width=0.4\textwidth]{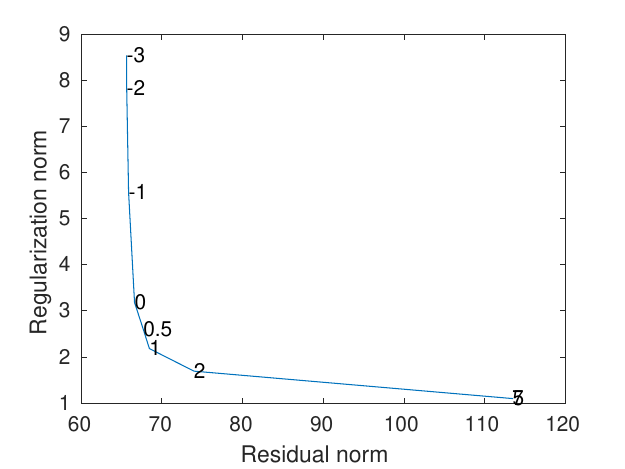}%
\includegraphics[width=0.4\textwidth]{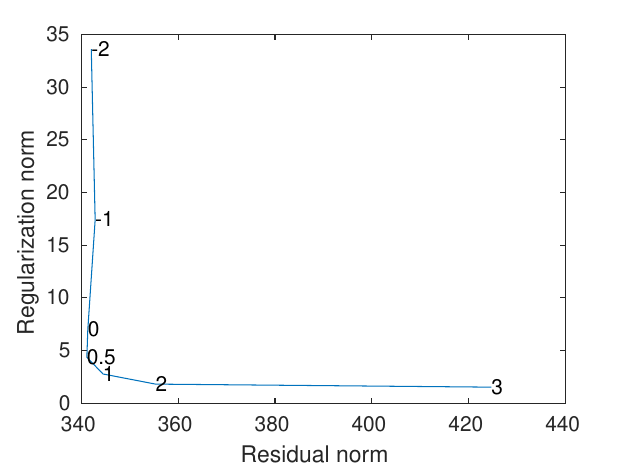}%
\caption[]{Plots of the regularity of the first $R=10$ PC covariance functions, measured as $\sum_{r=1}^{10} \int_\M \|\nabla \bm \psi_r\|^2$ versus the residual norm in the data, for different choices of $\text{log}(\lambda)$. On the left panel, the plot refers to the dynamic connectivity study, on the right panel the plot of the multi-subject connectivity study.}
\label{fig:lcurves}
\end{figure}

In Figure~\ref{fig:PCcovfunctions_dyn_loglambda1}-\ref{fig:scores_covfunctions_dyn_loglambda1} we show respectively the plots of the estimated PC covariance functions and associated variances from the dynamic functional connectivity study on $n=40$ segments with regularization parameter $\lambda=10$.

\begin{figure}[!htb]
%Figure from script script_hcp_multisubject_ISMCovfPCA.m
\centering
\includegraphics[width=0.6\textwidth]{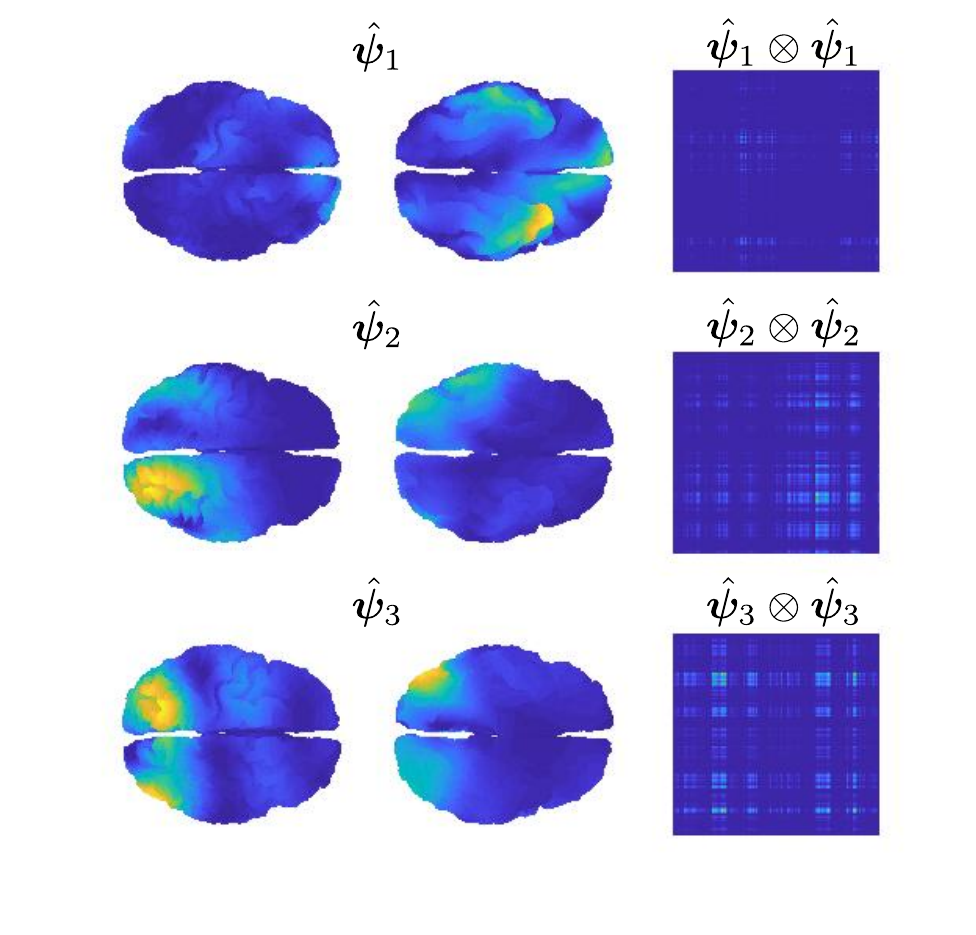}%
\caption[]{Energy maps of the estimated $\hat{\bm{\psi}}_1$, $\hat{\bm{\psi}}_2$ and $\hat{\bm{\psi}}_3$ obtained by applying Algorithm~\ref{alg:InverseCovfPCA}, with lower regularization ($\lambda=10$), to the covariance matrices computed from the MEG resting state data of a single subject on $n=40$ consecutive time intervals. On the right panel, the covariance functions associated with these energy maps.}
\label{fig:PCcovfunctions_dyn_loglambda1}
\end{figure}

\begin{figure}[!htb]
%Figure from script script_hcp_multisubject_ISMCovfPCA.m
\centering
\includegraphics[width=0.8\textwidth]{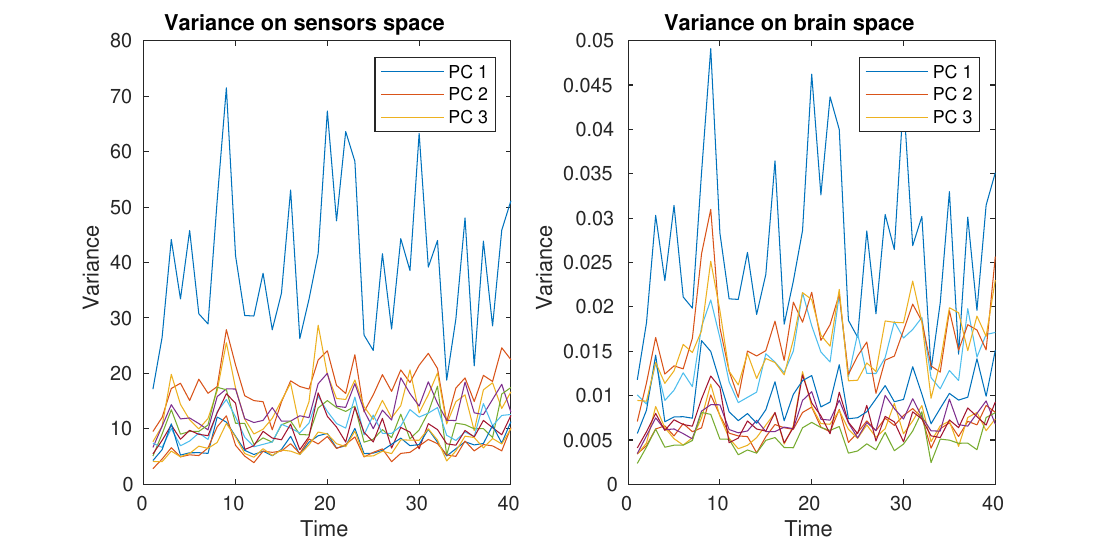}%
\caption[]{Plots of the segment-specific variances of the first $R=10$ PC covariance functions in time when a smaller regularization parameter is chosen ($\lambda=10$).}
\label{fig:scores_covfunctions_dyn_loglambda1}
\end{figure}

In Figure~\ref{fig:PCcovfunctions_dyn_n80}-\ref{fig:scores_covfunctions_dyn_n80} we show the estimated PC covariance functions and associated variances from the dynamic functional connectivity study on $n=80$ time segments with regularization parameter $\lambda=10^2$.

\begin{figure}[!htb]
%Figure from script script_hcp_multisubject_ISMCovfPCA.m
\centering
\includegraphics[width=0.6\textwidth]{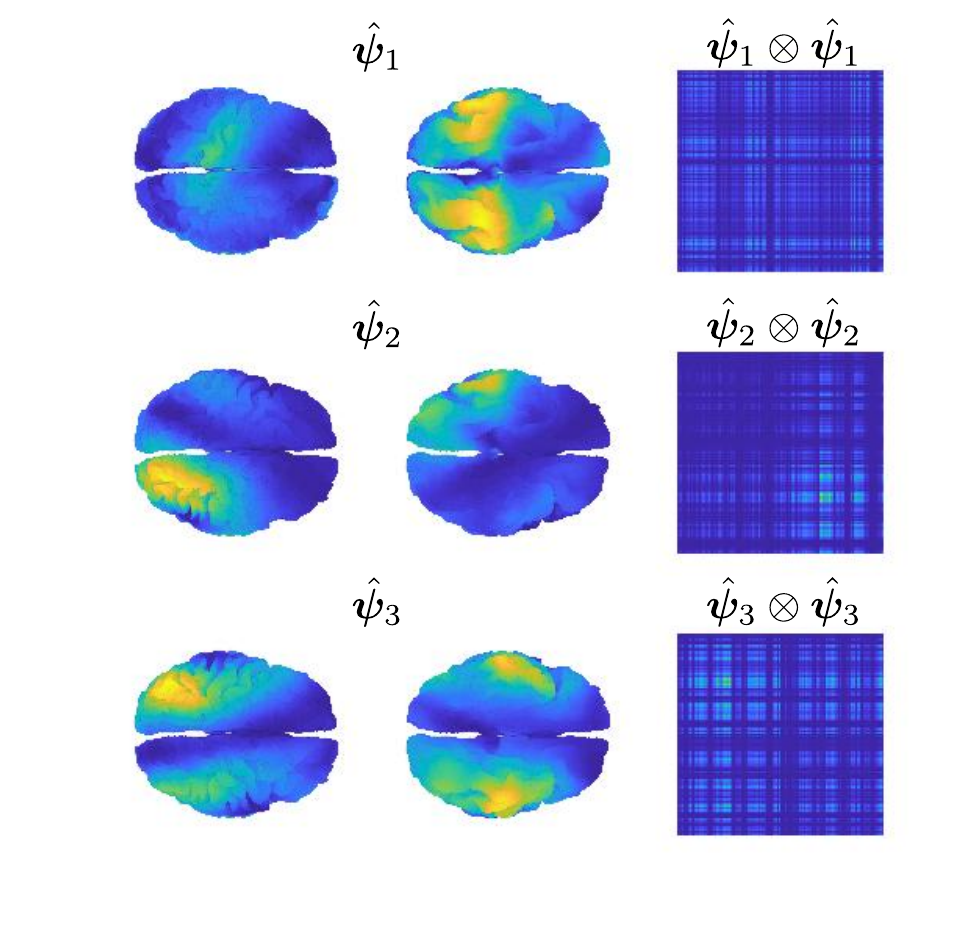}%
\caption[]{Energy maps of the estimated $\hat{\bm{\psi}}_1$, $\hat{\bm{\psi}}_2$ and $\hat{\bm{\psi}}_3$ obtained by applying Algorithm~\ref{alg:InverseCovfPCA}, with $\lambda=10^2$, to the covariance matrices computed from the MEG resting state data of a single subject on $n=80$ consecutive time intervals. On the right panel, the covariance functions associated with these energy maps.}
\label{fig:PCcovfunctions_dyn_n80}
\end{figure}

\begin{figure}[!htb]
%Figure from script script_hcp_multisubject_ISMCovfPCA.m
\centering
\includegraphics[width=0.8\textwidth]{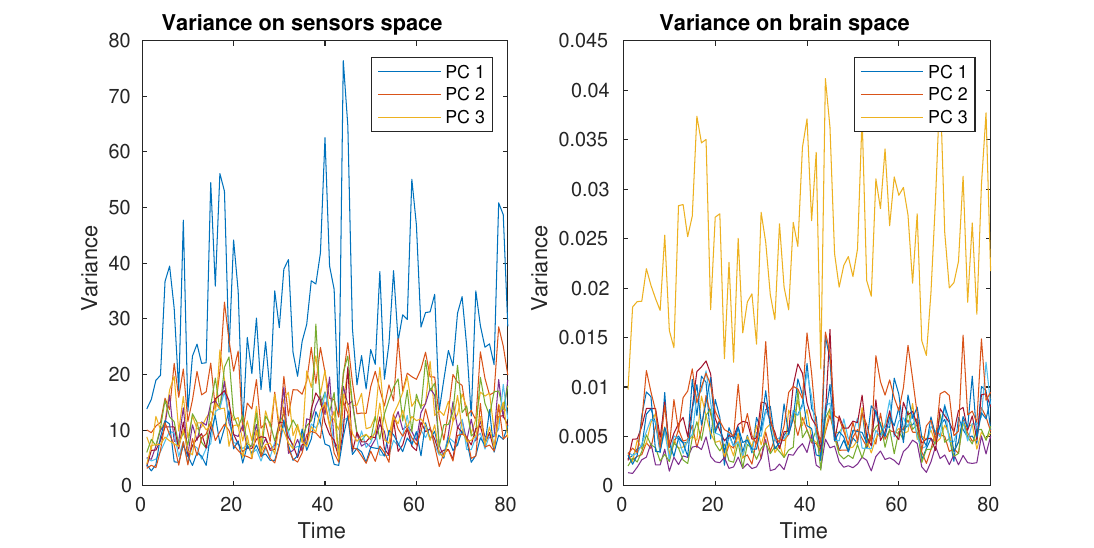}%
\caption[]{Plots of the segment-specific variances of the first $R=10$ PC covariance functions in time, with $\lambda=10^2$, when the MEG resting state data is split into $n=80$ consecutive time intervals.}
\label{fig:scores_covfunctions_dyn_n80}
\end{figure}

In Figure~\ref{fig:PCcovfunctions_multi_loglambda1}-\ref{fig:scores_covfunctions_multi_loglambda1} we show the estimated PC covariance functions and associated variances from the multi-subject functional connectivity study on $n=40$ subjects with regularization parameter $\lambda=10$.

\begin{figure}[!htb]
%Figure from script script_hcp_multisubject_ISMCovfPCA.m
\centering
\includegraphics[width=0.6\textwidth]{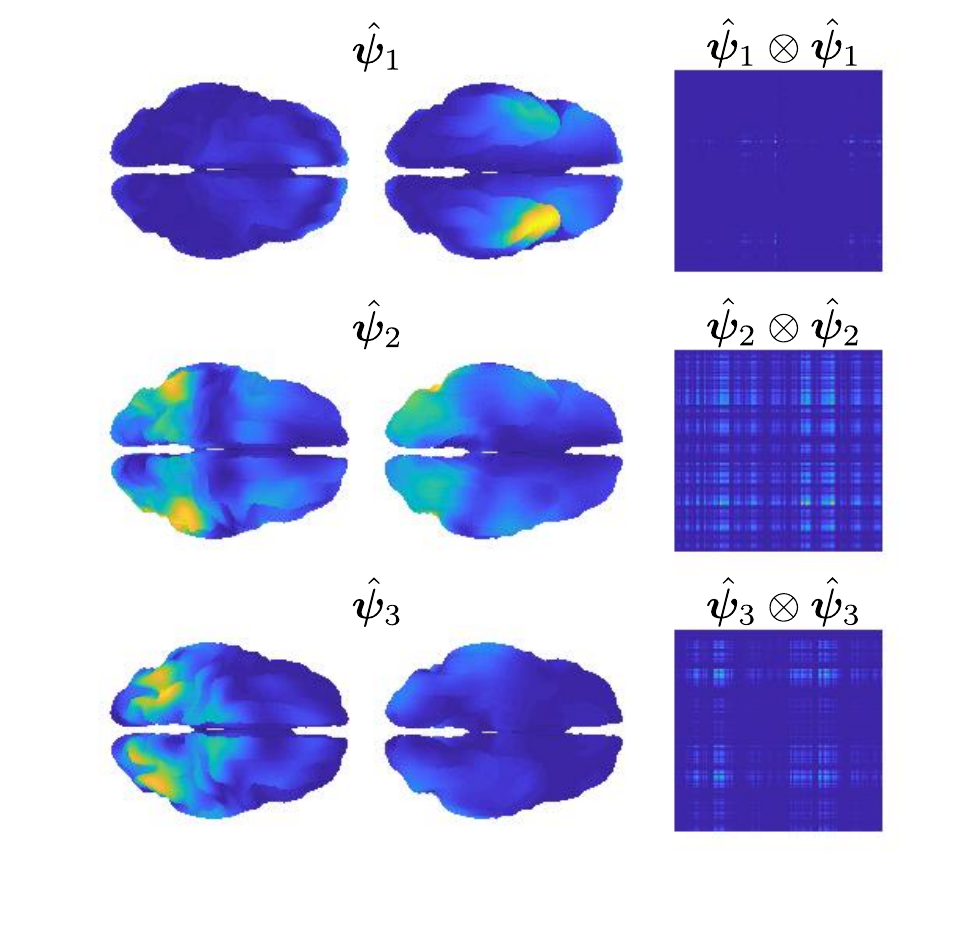}%
\caption[]{Energy maps of the estimated $\hat{\bm{\psi}}_1$, $\hat{\bm{\psi}}_2$ and $\hat{\bm{\psi}}_3$ obtained by applying Algorithm~\ref{alg:InverseCovfPCA}, with lower regularization ($\lambda=10$), to the covariance matrices computed from the MEG resting state data of $n=40$ different subjects. On the right panel, the covariance functions associated with these energy maps.}
\label{fig:PCcovfunctions_multi_loglambda1}
\end{figure}

\begin{figure}[!htb]
%Figure from script script_hcp_multisubject_ISMCovfPCA.m
\centering
\includegraphics[width=0.8\textwidth]{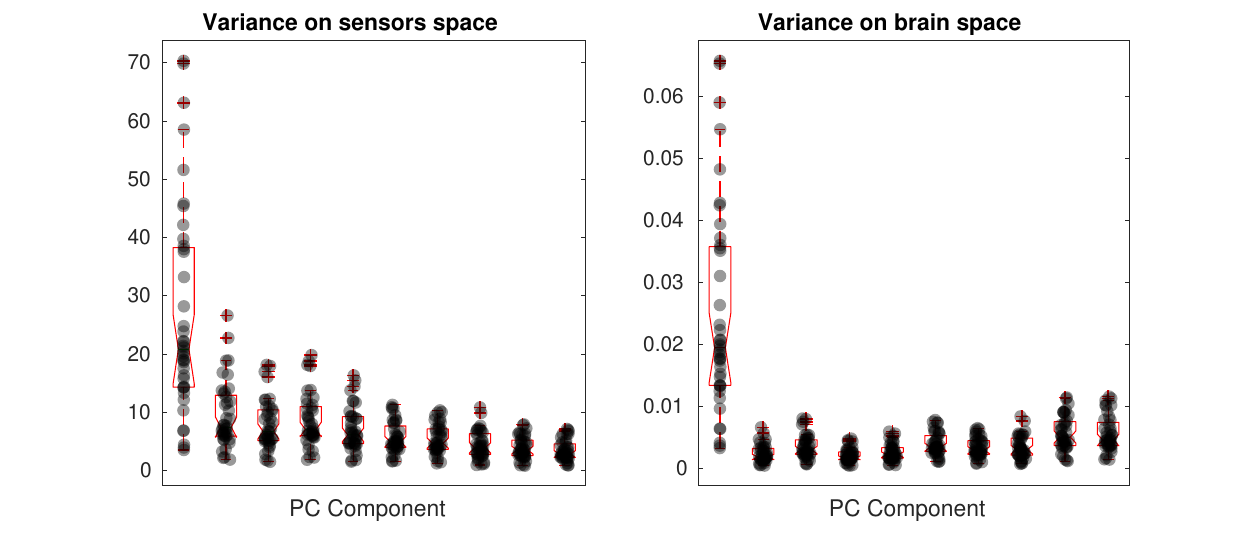}%
\caption[]{Plots of the subject-specific variances associated with the first $R=10$ PC covariance functions computed from $n=40$ subject, with regularization parameter $\lambda=10$.}
\label{fig:scores_covfunctions_multi_loglambda1}
\end{figure}

\clearpage

\textbf{Acknowledgments.} The authors would like to thank the anonymous reviewers and the member of the Editorial Board for their useful and constructive comments.

\bibliographystyle{abbrvnat_brief}

\bibliography{Bibliography} % The references (bibliography) information are stored in the file named "Bibliography.bib"

\end{document}